\newcommand{\eval}{\mathit{eval}}
\newcommand{\justif}{\vdash}
\newcommand{\ppo}{\leq}
\newcommand{\eppo}{\mathit{eppo}}
\newcommand{\condo}{\mathit{condo}}
\newcommand{\ctrlpo}{\mathit{ctrlpo}}
\newcommand{\ctrlocs}{\mathir{ctrlocs}}
\newcommand{\conflict}{\mathit{\#}}
\newcommand{\mrdlk}{\textsc{lk}}
\newcommand{\mrdrf}{\textsc{rf}}
\newcommand{\mrddep}{\textsc{dp}}
\newcommand{\support}{\textit{support}}
\newcommand{\domain}{\textsf{dom}}
\newcommand{\range}{\textsf{ran}}
\newcommand{\cloc}{\textit{cloc}}
\newcommand{\dep}{\mrddep}
\newcommand{\ltsarrow}[2]{\stackrel{#2}{\longrightarrow}_{#1}}
\newcommand{\futrel}[1]{\ltsarrow{{\sf F}}{#1}}
\newcommand{\fullrel}[1]{\ltsarrow{}{#1}}
\newcommand{\future}{\mathop{\triangleright}}
\newcommand{\restr}[2]{#1_{|#2}}
\newcommand{\MRD}{{\sf MRD}\xspace}
\newcommand{\blue}[1]{{\color{blue}#1}}
\newcommand{\rffun}{\gamma}
\newcommand{\pset}[1]{\overline{#1}}
\newcommand{\sdef}{\mathrel{\widehat{=}}}
\newcommand{\sconfigc}{\mathbb{C}}
\newcommand{\mrd}{\preprocess}
\newcommand{\preprocess}{\mathcal{P}}
\newcommand\evW[2]{{\rm W}\,{#1}\,{#2}}
\newcommand\evR[2]{{\rm R}\,{#1}\,{#2}}
\newcommand\evWR[2]{{\rm W}^R\,{#1}\,{#2}}
\newcommand\evRA[2]{{\rm R}^A\,{#1}\,{#2}}
\newcommand\evWRO[2]{{\rm W}^{[R]}\,{#1}\,{#2}}
\newcommand\evRAO[2]{{\rm R}^{[A]}\,{#1}\,{#2}}
\newcommand\levW[3]{\text{#1:}\evW{#2}{#3}}
\newcommand\levR[3]{\text{#1:}\evR{#2}{#3}}
\newcommand\levWR[3]{\text{#1:}\evWR{#2}{#3}}
\newcommand\levRA[3]{\text{#1:}\evRA{#2}{#3}}
\newcommand\levWRO[3]{\text{#1:}\evWRO{#2}{#3}}
\newcommand\levRAO[3]{\text{#1:}\evRAO{#2}{#3}}
\newcommand{\reffig}[1]{Fig.~\ref{#1}}
\newcommand{\reflem}[1]{Lem\-ma~\ref{#1}}
\newcommand{\refsec}[1]{Section~\ref{#1}}
\newcommand{\refex}[1]{Example~\ref{#1}}
\newcommand{\refdef}[1]{Definition~\ref{#1}}
\newcommand{\OW}{\mathit{O\!W}\!}
\newcommand{\wrval}{{\it wrval}}
\newcommand{\loc}{{\it var}}
\newcommand{\observedWrites}{\mathit{E\!W}\!}
\newcommand{\TA}{\mathit{TA}}
\newcommand{\R}{\mathsf{Rd}}
\newcommand{\W}{\mathsf{Wr}}
\newcommand{\tid}{{\it tid}}
\newcommand{\OMIT}[1]{}
\newcounter{sarrow}
\newcommand\strans[1]{%
  \mathrel{\raisebox{0.1em}{
    \stepcounter{sarrow}%
    \!\!\!\!
    \begin{tikzpicture}
      \node[inner sep=.5ex] (\thesarrow) {$\; \scriptstyle #1 \;$};
      \path[draw,<-,decorate,line width=0.25mm,
      decoration={zigzag,amplitude=0.7pt,segment length=1.2mm,pre=lineto,pre length=4pt}] 
      (\thesarrow.south east) -- (\thesarrow.south west);
    \end{tikzpicture}%
  }}}
\newcommand{\ltsArrow}[1]{\stackrel{#1}{\Longrightarrow}}
\tikzset{
    mo/.style={solid,->,>=stealth,thick,black!20!purple},
    hb/.style={solid,->,>=stealth,thick,blue},
    sw/.style={solid,->,>=stealth,thick,black!50!green},
    rf/.style={solid,->,>=stealth,thick,black!50!green},
    fr/.style={dashed,->,>=stealth,thick,red}
 }
\newcommand{\kwskip}{\textsf{\textbf{skip}}}
\newcommand{\kwdo}{\textsf{\textbf{do}}}
\newcommand{\kwwhile}{\textsf{\textbf{while}}}
\newcommand{\kwif}{\textsf{\textbf{if}}}
\newcommand{\kwthen}{\textsf{\textbf{then}}}
\newcommand{\kwelse}{\textsf{\textbf{else}}}
\newcommand{\ltsb}{{\sf sb}}
\newcommand{\lteco}{{\sf eco}}
\newcommand{\ltrf}{\mathord{\sf rf}}
\newcommand{\ltfr}{{\sf fr}}
\newcommand{\lthb}{{\sf hb}}
\newcommand{\ltmo}{{\sf mo}}
\newcommand{\True}{{\it true}}
\newcommand{\False}{{\it false}}
\newcommand{\Comm}{{\it Com}}
\newcommand{\AComm}{{\it ACom}}
\newcommand{\whilestep}[1]{\stackrel{#1}{\longrightarrow}}
\newcommand{\sem}[1]{\llbracket #1 \rrbracket}
\tikzstyle{tid}=[font=\bf, align=center]
\tikzstyle{rid}=[font=\sc, align=center]
\tikzstyle{tsep}=[thin,double]
\tikzstyle{code}=[font=\ttfamily\small, align=center]
\tikzstyle{basic} = [rectangle, rounded corners=3pt, thin,draw=black, fill=blue!8, sibling distance=20mm, minimum width=11mm, minimum height=5.5mm,inner sep=0pt]
\tikzstyle{bold} = [rectangle, rounded corners=3pt, thin, draw=black, fill=orange!45, sibling distance=20mm]
\tikzstyle{invisible} = [draw=none]
\tikzstyle{po}=[->,thick,color=black]
\tikzstyle{dp}=[->,thick,color=black!20!yellow]
\tikzstyle{conf} = [-, draw=black!30!red, line width=0.9, decorate, decoration={zigzag, segment length=7, amplitude=2.6, pre=lineto, pre length=0pt, post=lineto, post length=0pt}]
\tikzstyle{just} = [->, draw=green!50!black,line width=1.000, dashed]
\tikzstyle{lab} = [xshift=0.5cm,yshift=-0.1cm,color=green!20!black,font=\scriptsize]
\tikzstyle{labright} = [xshift=0.5cm,yshift=-0.1cm,color=green!20!black,font=\scriptsize]
\tikzstyle{lableft} = [xshift=-0.5cm, yshift=-0.1cm,color=green!20!black,font=\scriptsize]
\tikzstyle{poloc}=[->,thick,color=purple]
\tikzstyle{co}=[->,thick,color=brown!50!black]
\tikzstyle{rf}=[->,thick,color=red]
\tikzstyle{fr}=[->,thick,dashed,color=green!50!black]
\tikzstyle{sc}=[->,thick,cyan!50!black]
\tikzstyle{vppo}=[->,thick,purple!70!blue]
\tikzstyle{cause}=[->,thick,blue!50!black]
\tikzstyle{si}=[-,thick,double,brown!70!black]
\tikzstyle{sync}=[->,thick,color=cyan!50!black,]
\tikzstyle{bob}=[->,thick,color=cyan!50!black,]
\tikzstyle{addr}=[->,thick,color=cyan!50!black,]
\tikzstyle{strongrf}=[->,thick,color=red]
\newcommand{\imp}{\Rightarrow}
\newcommand{\mrdlab}{{\it Lab}}
\title{Owicki-Gries Reasoning for C11 Programs with Relaxed Dependencies (Extended Version)\thanks{We thank Simon Doherty for discussions on an earlier version of this work. Wright is supported by VeTSS.      Batty is supported by EPSRC grants EP/V000470/1 and EP/R032971/1, and the Royal Academy of Engineering. Dongol is supported by EPSRC grants EP/V038915/1, EP/R032556/1, EP/R025134/2, VeTSS and ARC Discovery Grant DP190102142. }}
\author{Daniel Wright\inst{1} \and Mark Batty\inst{1} \and Brijesh Dongol\inst{2}}
\institute{University of Kent \and  University of Surrey}
\begin{document}

\maketitle

\begin{abstract}
    Deductive verification techniques for C11 programs have advanced significantly in recent years with the development of operational semantics and associated logics for increasingly large fragments of C11. However, these semantics and logics have been developed in a restricted setting to avoid the {\em thin-air-read} problem. 
    In this paper, we propose an operational semantics that leverages an intra-thread partial order (called {\em semantic dependencies}) induced by a recently developed denotational event-structure-based semantics. We prove that our operational semantics is sound and complete with respect to the denotational semantics. We present an associated logic that generalises a recent Owicki-Gries framework for RC11 (repaired C11), and demonstrate the use of this logic over several example proofs. 
\end{abstract}

\section{Introduction}

Significant advances have now been made on the semantics of (weak) memory models using a variety of axiomatic (aka declarative), operational and denotational techniques. Several recent works have therefore focussed on logics and associated verification frameworks for {\em reasoning} about program executions over weak memory models. These include specialised separation logics \cite{DBLP:conf/esop/SvendsenPDLV18,DBLP:journals/pacmpl/DangJKD20,DBLP:conf/esop/DokoV17,DBLP:conf/ecoop/KaiserDDLV17} and adaptations of classical Owicki-Gries reasoning \cite{DBLP:conf/ecoop/DalvandiDDW19,LahavV15}. At the level of languages, there has been a particular focus on C11 (the 2011 C/C++ standard). 

Due to the complexity of C11 \cite{DBLP:conf/popl/BattyOSSW11}, many reasoning techniques have restricted themselves to particular fragments of the language by only allowing certain types of memory accesses and/or reordering behaviours. Several works (e.g., \cite{DBLP:conf/ecoop/KaiserDDLV17,LahavV15,DBLP:conf/ppopp/DohertyDWD19,DBLP:conf/ecoop/DalvandiDDW19}) assume a memory model that guarantees that program order (aka sequenced-before order) is maintained~\cite{DBLP:conf/pldi/LahavVKHD17}. While this restriction makes the logics and proofs of program correctness more manageable, it precludes reasoning about observable executions displaying certain real-world phenomena, e.g., those exhibited by the load-buffering litmus test under Power or ARMv7. 

\smallskip
\noindent {\it Load buffering and the thin air problem.}
C11 suffers from the \emph{out of thin air} problem~\cite{DBLP:conf/esop/BattyMNPS15}, where the language does not impose any ordering between a read and the writes that depend on its value. The intention is to universally permit aggressive compiler optimisation, but in doing so, this accidentally allows writes to take illogical values. The ARM and Power processors allow the relaxed outcome $r1=1$ and $r2=2$ in \reffig{fig:lb-sdep}, where there is nothing to enforce the order of the load and store in the second thread~\cite{DBLP:journals/pacmpl/PodkopaevLV19}. In \reffig{fig:lb-oota}, each thread reads and then writes the value read -- a data dependency from read to write. This dependency 
makes optimisation impossible, so the relaxed outcome $r1=r2=1$ is forbidden on every combination of compiler and target processor. C++ erroneously allows the outcome $r1=r2=1$, producing the value 1 ``out of thin air''.

Formally handling load buffering while avoiding out of thin air behaviours turns out to be an enormously complex task. Although several works \cite{DBLP:conf/popl/KangHLVD17,DBLP:conf/pldi/LeeCPCHLV20,ESOP2020-MRD,DBLP:journals/pacmpl/ChakrabortyV19,DBLP:journals/lmcs/JeffreyR19} have been dedicated to providing semantics for different variations of the load buffering example, many of these have also been shown to be inconsistent with expected behaviours under certain litmus tests \cite{ESOP2020-MRD,DBLP:journals/pacmpl/JagadeesanJR20}.
This work builds on top of the  recent \MRD (Modular Relaxed Dependencies) semantics by Paviotti et al~\cite{ESOP2020-MRD}. \MRD avoids thin air, and aims for compatibility with the existing ISO C and C++ standards. 

A key component of \MRD is the calculation of a {\em semantic dependency relation}, which describes when certain reorderings are disallowed. The program in \reffig{fig:lb-sdep} contains only the semantic dependency between lines 1 and~2, whereas the program in \reffig{fig:lb-oota} contains semantic dependencies between lines 1 and~2, and lines 3 and~4. The program in \reffig{fig:lb-sdep} may therefore execute line~4 before line 3, while the program in \reffig{fig:lb-oota} must execute both threads in order. 





\begin{figure}[t]
\noindent
\begin{minipage}[b]{0.44\columnwidth}
\centering
\begin{tabular}[t]{l||@{\ }l}
\multicolumn{2}{c}{\tt Init: x = y = r1 = r2 = 0}\\
\begin{minipage}[t]{0.44\columnwidth}
\tt Thread 1 \\
1: r1 := [x] \\
2: [y] := r1+1  \\
 \blue{\textrm{$\{r1 \in \{0, 1\}\}$}}
\end{minipage}
&
\begin{minipage}[t]{0.52\columnwidth}
\tt Thread 2 \\
3: r2 := [y] \\
4: [x] := 1  \\
\small \blue{\textrm{$\{r2 \in \{r1 + 1, 0\}\}$}}
\end{minipage}
\end{tabular}\\[2pt]
\blue{\{$r1 \neq 1 \lor r2 \neq 1\}$}
\vspace{-0.5em}
\caption{Load-buffering with a semantic dependency}
\label{fig:lb-sdep}
\end{minipage}
\hfill
\begin{minipage}[b]{0.4\columnwidth}
\centering
\begin{tabular}[t]{l||@{\ }l}
\multicolumn{2}{c}{\tt Init: x = y = r1 = r2 =  0}\\
\begin{minipage}[t]{0.42\columnwidth}
\tt Thread 1 \\
1: r1 := [x] \\
2: [y] := r1  \\
 \blue{\textrm{$\{r1 = 0\}$}}
\end{minipage}
&
\begin{minipage}[t]{0.52\columnwidth}
\tt Thread 2 \\
3: r2 := [x] \\
4: [x] := r2  \\
\small \blue{\textrm{$\{r2 = 0\}$}}
\end{minipage}
\end{tabular}\\[2pt]
\blue{\{$r1 = r2 = 0\}$}
\vspace{-0.5em}
\caption{Load-buffering with two dependencies}
\label{fig:lb-oota}
\end{minipage}
\end{figure}




\smallskip
\noindent {\it Contributions.}
Although precise, there is currently no direct mechanism for reasoning about programs under \MRD because \MRD is a denotational semantics defined over an event structure~\cite{DBLP:conf/ac/Winskel86}. This paper addresses this gap by developing an operational semantics for $\MRD$, which we then use as a basis for a deductive Owicki-Gries style verification framework. 
The key idea of our operational semantics is to take the semantic dependency relation as the only order in which programs must be executed, thereby allowing intra-thread reordering. This changes the fundamental meaning of sequential composition, allowing statements that occur ``later'' in the program to be executed early. 

Our semantics is also designed to take non multi-copy atomicity into account, whereby writes are not propagated to all threads at the same time and hence may appear take effect out-of-order~\cite{DBLP:journals/computer/AdveG96,AlglaveMT14}. Note that this phenomenon is distinct from the reordering of operations within a thread (described above), and it is possible to separate the two.
Here, we adapt the operational model of weak memory effects by Doherty et al \cite{DBLP:conf/ppopp/DohertyDWD19} so that it follows semantic dependency rather than the more restrictive thread order used in earlier works~\cite{DBLP:conf/ppopp/DohertyDWD19,DBLP:conf/ecoop/KaiserDDLV17,LahavV15,DBLP:conf/ecoop/DalvandiDDW19}.

Finally, we develop a logic capable of reasoning about program executions that exhibit both of the phenomena described above. Our logic makes use of the technique by Dalvandi et al~\cite{DBLP:conf/ecoop/DalvandiDDW19} of including assertions that enable reasoning about the ``views'' of each thread. Since we have concurrent programs, the logic we develop incorporates Owicki-Gries style reasoning for programs, in which assertions are shown to be both locally correct and globally stable (interference free). However, unlike earlier works~\cite{DBLP:conf/ecoop/DalvandiDDW19,LahavV15}, since we relax thread order, the standard approach to Hoare-style proof decomposition is not possible. 

\paragraph{\it Overview.}
This paper is organised as follows. We recap \MRD in \refsec{sec:mrd} and an operational semantics for RC11 in \refsec{sec:ppopp}. Then in \refsec{sec:opsem-mrd}, we present a combined semantics, where program order in RC11 is replaced by a more relaxed order defined by \MRD. A Hoare-like logic for reasoning about relaxed program execution together with Owicki-Gries-like rules for reasoning about interference is given in \refsec{sec:hoare-logic}. We present an example proof in \refsec{sec:verif-ex}.

\section{\MRD and semantic dependencies}
\label{sec:mrd}
In this section, we review the \MRD semantics for a simple C-like while language. This provides a mechanism for defining (in a denotational manner) a relaxed order in which statements within a thread are executed, which precludes development of a program logic. 

\paragraph{Events and actions}
Weak memory literature uses a variety of terminology to refer to internal representations of changes to global memory, which complicates attempts to unify multiple models. In the following, we use \emph{events} to refer to the objects created and manipulated by \MRD, normally represented as integers. We use \emph{actions} to refer to objects of the form $\levR{i}{x}{v}$, $\levW{i}{x}{v}$, referring to a read or write of value $v$ at global location $x$ arising from line $i$ of the input program.

\paragraph{Program syntax.} 
\label{sec:syntax}
We assume {\em shared variables} $x, y, z, \ldots$ from a set $X$, {\em registers} $r_1, r_2, \ldots$ from a set $Reg$, and 
{\em register files} $\rho : Reg \to {\it Val}$ mapping registers to values from a set ${\it Val}$.  
    {\em Expressions} $e, e_1, e_2,\ldots$ are taken from a set $E$,
    whose syntax we do not specify,
    but which can be evaluated w.r.t. a register file
    using $\eval \in E \times (Reg \to {\it Val}) \to {\it Val}$. Thus,
    $\eval(e, \rho)$ is the value of $e$ given the
    register values in $\rho$. 
For basic commands, we have variable assignments
(or {\em stores})  $[y] := e$ 
and register assignments (or {\em loads}) $
    r := [x]$. In this paper, we assume that stores and loads are {\em relaxed} \cite{DBLP:conf/popl/BattyDG13,DBLP:conf/ppopp/DohertyDWD19,DBLP:conf/pldi/LahavVKHD17} unless they are explicitly specified to be a releasing store (denoted $[y] :=^R e$) or an acquiring load (denoted $
    r :=^A [x]$). 
Finally, we assume a simple language of {\em programs}. The  
only unconventional aspect of this language is that
we require each (variable or register) assignment be decorated with a unique {\em control label}. 
This allows a semantics which does not, in general, respect program order to refer to an individual statement without ambiguity.
The syntax of commands (for a single thread) is defined by the
following grammar, where $B$ is an expression that evaluates to a
boolean and $i$ is a control label. Note that since guards are expressions, they must not mention any shared variables --- any guard that relies on a shared memory variable must load its values into a local register prior to evaluating the guard. 

\smallskip \noindent \hfill 
\begin{tabular}[t]{r@{~}l}
  $\AComm$ ::= &  $i\!: \kwskip \mid i\!: [x] :=^{[R]} e \mid i\!:  r  :=^{[A]} [x] \mid i\!:  r := e
                $ \\
  $\Comm$ ::= & $\AComm \mid \Comm ; \Comm \mid \kwif~B\ \kwthen\ \Comm\ \kwelse\ \Comm 
   \mid \kwwhile\ B\ \kwdo\ \Comm$  
\end{tabular} \hfill{}\smallskip

\noindent We use $[x] :=^{[R]} e$ to denote that the releasing annotation $R$ is optional (similarly $r :=^{[A]} [x]$ for the acquiring annotation $A$). For simplicity, 
we focus attention on the core atomic features of C11 necessary to probe the thin-air problem, and omit more complex instructions such as CAS, fences, non-atomic accesses and SC accesses~\cite{DBLP:conf/popl/BattyOSSW11}.  

 We assume a top level parallel composition operator. Thus, a program is of the form $C_1 \| C_2 \| \ldots \| C_n$ where $C_i \in \Comm$. 
 We further assume each atomic command $\AComm$ in the program has a unique label across all threads. 

\paragraph{Denotational $\MRD$.}
For the purposes of this paper (i.e., the development of the operational semantics and associated program logic), the precise details of the $\MRD$ semantics~\cite{ESOP2020-MRD} are unimportant. The most important aspect that we use is the set of {\em semantic dependency} relations that it generates, which precisely characterise the order in which atomic statements are executed.

In $\MRD$, the {\em denotation} of a program $P$ is returned by the semantic interpretation function $\sem{P}$. This gives a {\em coherent event structure} of the form: 

$$
(L, S, \justif, \leq)
$$
where
\begin{itemize}
\item $L = (E, \sqsubseteq, \conflict, \mrdlab)$ is an {\em event structure}~\cite{DBLP:conf/ac/Winskel86} equipped with a labelling function $\mrdlab$ from events (represented internally as integer identifiers) to actions. The set $E$ contains all events in the structure, $a_1 \sqsubseteq a_2$ for $a_1, a_2 \in E$ iff $a_1$ is program ordered before $a_2$. Events in $\conflict$ cannot happen simultaneously, e.g., two reads of a variable returning different values are in conflict. 
\item $S = (A, \mrdlk, \mrdrf, \mrddep)$ is a set of {\em partial executions}, each of which represents a possible interaction of the program with the memory system. Each execution in $S$ is a tuple of relations representing {\em lock/unlock order}, a {\em reads-from relation} and {\em dependency order} between operations of {\em that} execution, and the set of events to be executed. An execution is \emph{complete} if every read in $A$ is linked to a write to the same location of the same value by an $\mrdrf$ edge.
\item $\justif$ is a {\em justification relation} used in the construction of the program's dependency relation. 
\item $\ppo$ is the {\em preserved program order} (c.f.,~\cite{AlglaveMT14}), a subset of the program order of $F$ that is respected by the memory model.
\end{itemize}

To develop our operational semantics, we only require some of these components. From $S$, we only require the dependency order. From the labelled event structure, we use the labelling function $\mrdlab$ to connect actions to events. From the coherent event structure, we use $\ppo$ to enforce ordering alongside dependency order.\footnote{$\MRD$ also defines a set of axioms that describes when a particular execution is consistent with a denotation. We do not discuss these in detail here, but they are used in the soundness and completeness proofs.}

\begin{example} 
The event structure in \reffig{fig:ces} represents the denotation of the program in \reffig{fig:lb-sdep}.
Note first that each store of a value into a register generates multiple events -- one for each possible value. This is because $\MRD$ cannot make assertions about which values it may or may not observe during interpretation of the structure, it can only be provided with global value range restrictions prior to running. Instead, each read must indicate a write whose value it is observing during the axiomatic checks, subject to various coherence restrictions. A read will only appear in a \emph{complete execution} if it can satisfy this requirement.

Events 2 and 4 are in \emph{conflict} (drawn as a red zigzag) as they represent different potential values being read at line 1. Likewise, events 6, 8, and 10 represent different potential values at line 3. A single execution can observe events 2 and 6, but not events 2 and 4.
Events 2 and 3 are in \emph{program order} (represented by the black arrow).
$\MRD$ generates a {\em dependency order} between events 2 and 3 and events 4 and~5 (represented by the yellow arrow). The read at line 1 is stored in register {\tt r1}, which is in turn accessed by the write in line 2, leading to a data dependency.
There is no semantic dependency between any events arising from lines 3 and~4, because there is no way for the value read at line 3 to influence the value written at line 4.
This means that lines 1 and 2 must be executed in order, but line 4 is free to execute before line 3. For example, consider the traces $H_1$ and $H_2$ below. 
\begin{align*}
H_1 \ \ & =\ \  {\rm Init}\ \ \levW{2}{y}{1} \ \ \levW{4}{x}{1}\ \ \levR{1}{x}{1}\ \ \levR{3}{y}{1} 
\\
H_2 \ \ & =\ \  {\rm Init}\ \ \levR{1}{x}{1}\ \ \levW{4}{x}{1}\ \ \levW{2}{y}{1} \ \ \levR{3}{y}{1}     
\end{align*}
For the program in \reffig{fig:lb-sdep}, the trace $H_1$ is {\em disallowed} since in $H_1$, the operations at lines $2$ and $1$ are swapped in a manner inconsistent with the dependency order. 
The trace $H_2$ is \emph{allowed}, since lines 3 and 4 are free to execute in any order.

\begin{figure}[t]
  \centering
  \scalebox{0.25}{\includegraphics{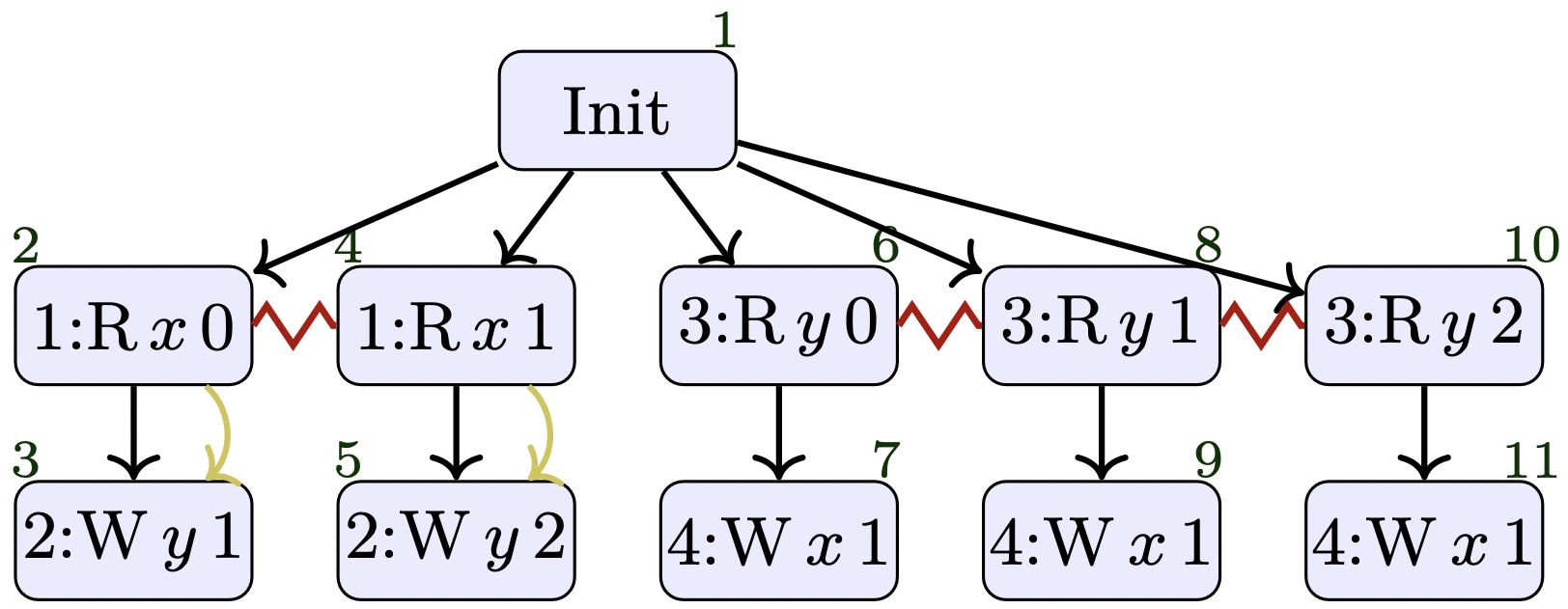}}
  
   \caption{Coherent event structure for the program in Fig.~\ref{fig:lb-sdep}}
    \label{fig:ces}
    \end{figure}

\end{example}





\section{Operational semantics with relaxed write propagation}
\label{sec:ppopp}



To reason about the relaxed propagation of writes, Doherty et al~\cite{DBLP:conf/ppopp/DohertyDWD19} build an operational semantics that is equivalent to a declarative (aka axiomatic) semantics~\cite{DBLP:conf/pldi/LahavVKHD17}, which defines consistency of {\em tagged action graphs}. The operational semantics allows the stepwise construction of such graphs, in contrast to a declarative approach which only considers complete executions that are either accepted or rejected by the axioms of the memory model. We do not discuss the declarative semantics here, focusing instead on the operational model, which we generalise in \refsec{sec:opsem-mrd}. We also note that like other prior works~\cite{DBLP:conf/ecoop/KaiserDDLV17,DBLP:conf/pldi/LahavVKHD17,LahavV15}, the existing operational semantics \cite{DBLP:conf/ppopp/DohertyDWD19} assumes program order within a thread is maintained.

Formally, {\em tagged actions} are triples $(g, a, t)$, where $g$ is a tag (uniquely identifying the action); $a$ is a read or write action (potentially annotated as a releasing write or acquiring read), and $t$ is a thread (corresponding to the thread that issued the action). We let $\TA$ be the set of all tagged actions, $\W, \W_R, \R, \R_A \subseteq \TA$ be the set of write, releasing write, read and acquiring reads, respectively. Note that $\W$ is the set of all writes, including those from $\W_R$ (similarly, $\R$).

A {\em tagged action graph} is a tuple $(D, \ltsb, \ltrf, \ltmo)$ where $D$ is a set of tagged actions, which may correspond to read or write actions, and $\ltsb$, $\ltrf$ and $\ltmo$ are relations over $D$. Here, $\ltsb$ is the {\em sequenced before} relation, where $b_1~\ltsb~ b_2$ iff $b_1$ and $b_2$ are tagged actions of the same thread and $b_1$ is executed before $b_2$.  $\ltrf \subseteq \W \times \R$ is the {\em reads-from} relation~\cite{AlglaveMT14} relating each write to the read that reads from that write. Finally, $\ltmo \subseteq \W \times \W$ denotes {\em modification order} (aka coherence order), which is the order in which writes occur in the system, and hence the order in which writes must be seen by all threads. Note that if $b_1~\ltmo~b_2$, then $b_1$ and $b_2$ must act on the same variable. Moreover, $\ltmo | x$ is a total order, where $\ltmo | x$ is the relation $\ltmo$ restricted to writes of variable $x$. 

We assume tagged action graphs are initialised with writes corresponding to the initialisation of the program, and relations $\ltsb$, $\ltrf$ and $\ltmo$ are initially empty. 

Following Lahav et al~\cite{DBLP:conf/pldi/LahavVKHD17}, the so-called repairing or restricted C11 model (which is the model in~\cite{DBLP:conf/ppopp/DohertyDWD19,DBLP:conf/ecoop/DalvandiDDW19,DBLP:conf/ecoop/KaiserDDLV17}) instantiates sequence-before order to the {\em program order} relation~\cite{AlglaveMT14}. This disallows statements within a thread from being executed out-of-order (although writes may be propagated to other threads in a relaxed manner). In \refsec{sec:opsem-mrd}, we present an alternative instantiation of $\ltsb$ using the semantic dependencies generated by $\MRD$ to enable out-of-order executions within a thread can be considered. 

To characterise the operational semantics, we must define three further relations: {\em happens before}, denoted $\lthb$ (which captures a notion of causality); {\em from read}, denoted $\ltfr$ (which relates each read to the write that overwrites the value read), and {\em extended coherence order}, denoted $\lteco$ (which fixes the order of writes and reads). Formally we have:
\begin{align*}
    \lthb = (\ltsb \cup (\ltrf \cap \W_R \times \R_A))^+
    \quad\ \ 
    \ltfr = (\ltrf^{-1} ; \ltmo) \setminus {\it Id}
    \quad\ \ 
    \lteco = (\ltrf \cup \ltmo \cup \ltfr)^+
\end{align*}
where {\it Id} is the identity relation, $;$ denotes relational composition, and ${}^+$ denotes transitive closure. Note that there is only a happens-before relation between a write-read pair related by $\ltrf$ if the write is releasing and the read is acquiring. 



\begin{figure}[t]
  \centering
  \small
  $ \inference[{\sc Read}] {b = (g, a, t) \qquad g \notin tags(D) \qquad a \in \{\levR{i}{x}{n},
    \levRA{i}{x}{n} \}    \\
   \sigma = (D, \ltsb,\ltrf,\ltmo)\qquad w \in \OW_\sigma(t) \qquad  \loc(w) = x \qquad \wrval(w)=n} {(D,\ltsb, \ltrf,\ltmo) \strans{\ b\ } (D \cup \{b\}, \ltsb +_D b, \ltrf \cup
    \{(w,b)\},\ltmo)}$
  \bigskip

  $ \inference[{\sc Write}] {
    b = (g, a, t) \qquad g \notin tags(D) \qquad a \in \{ \levW{i}{x}{n},
    \levWR{i}{x}{n}\} \\ \sigma = (D, \ltsb,\ltrf,\ltmo) \qquad w \in \OW_\sigma(t) 
    \qquad \loc(w) = x 
  }{(D,\ltsb, \ltrf,\ltmo) \strans{\ b\ } (D \cup \{b\},\ltsb +_D b, \ltrf,\ltmo[w,b])}$


  \caption{Memory semantics}
  \label{fig:c11-opsem}
\end{figure}

With these basic relations in place, we are now in a position to define the transition relation governing the operational rules for read and write actions. The rules themselves are given in \reffig{fig:c11-opsem}. Assuming $\sigma$ denotes the set of all tagged action graphs, each transition is a relation $
\mathord{\strans{\ \ }} \subseteq \Sigma \times \TA \times \Sigma$. We write
$\sigma \strans{\ b\ } \sigma'$ to denote
$(\sigma, b, \sigma') \in \mathord{\strans{\ \ }}$.

To accommodate relaxed propagation of writes, the semantics allows different threads to have different views of the system, formalised by a set of {\em observable writes}. These are in turn defined in terms of a set of {\em encountered writes}, denoted $\observedWrites_\sigma(t)$, which are the writes that thread $t$ is
aware of (either directly or indirectly) in state $\sigma$:
\begin{align*}
  \observedWrites_\sigma(t) = \{w  \in \W \cap D_\sigma \mid \exists b \in D_\sigma.\
  \begin{array}[t]{@{}l@{}}
    \tid(b) = t \wedge {} 
    (w,b) \in \lteco_\sigma^?; \lthb_\sigma^? \}
  \end{array}
\end{align*}
Here $R^?$ is the reflexive closure of relation $R$ and $\tid$ returns the thread identifier of the given tagged action.  Thus, for each
$w \in \observedWrites_\sigma(t)$, there must exist a tagged action $b$ of
thread $t$ such that $w$ is either $\lteco$-, $\lthb$- or
$\lteco ; \lthb$-prior to $b$. 
From these we determine the \emph{observable writes}, which are the
writes that thread $t$ can observe in its next read. These are defined
as:
\begin{align*}
  \OW_\sigma(t) = \{w  \in \W \cap D_\sigma  \mid \forall w' \in \observedWrites_\sigma(t).\ (w,w') \notin \ltmo_\sigma\}
\end{align*}
Observable writes are writes that are not succeeded by any encountered write in
modification order, i.e., the thread has not seen another write
overwriting the value being read.

We now describe each of the rules in \reffig{fig:c11-opsem}. 
Relations $\ltrf$ and $\ltmo$ are updated according to the
write actions in $D$ that are observable to the thread $t$ executing the
given action. A read action $b$ may read from any write $w \in \OW_\sigma(t)$. In the post state, we obtain a new tagged action set $D \cup \{b\}$, sequenced-before relation $\ltsb +_D b$ (defined below) and reads-from relation $\ltrf \cup \{(w, b)\}$. Relation $\ltmo$ is unmodified. Formally, $\ltsb +_D b$ introduces $b$ at the end of $\ltsb$ for thread $t$:
\begin{align*} 
  \ltsb +_D b
  & =
\begin{array}[c]{@{}l@{}}
      \ \ltsb \cup  (\{b' \in D  \mid \tid(b') \in \{\tid(b), 0\} \}
      \times \{b\} )
    \end{array}
\end{align*} 
Note that we assume that initialisation is carried out by a unique thread with id $0$, and that we require the set $D$ as an input to cope with initialisation where $\ltsb$ may be empty for the given thread. 

The write rule is similar except that it leaves $\ltrf$ unchanged and updates $\ltmo$ to $\ltmo[w, b]$. 
Given that $R[x]$ is the relational image of $x$ in $R$, we define
$R_{\Downarrow x} = \{x\} \cup R^{-1}[x]$ 
to be the set of all elements in $R$ that relate to $x$ (inclusive).
The insertion of a tagged write action $b$ directly after a $w$ in
$\ltmo$ is given by
\[ \textstyle\ltmo[w,b] =
  \begin{array}[t]{@{}l@{}}
    \ltmo  \cup 
    (\ltmo_{\Downarrow w} \times \{b\}) \cup (\{b\} \times
    \ltmo[w]) 
  \end{array}
\]
Thus, $\ltmo[w,b]$ effectively introduces $b$ immediately after the $w$ in $\ltmo$.

The final component of the operational semantics is a set of rules that link the program syntax in \refsec{sec:syntax} with the memory semantics. In prior work, this was through a set of operational rules that generate the actions associated with each atomic statement, combined with the rules in \reffig{fig:c11-opsem} to formalise the evolution of states. A sample of these rules for reads and writes from memory are given in \reffig{fig:comm-sem-sample}. 
These are then lifted to the level of programs using the rules {\sc Prog} and {\sc Full}, where the transition rule $\ltsArrow{b}$ combines the thread-local semantics $\whilestep{a}_t$ and global-state semantics $\strans{\ b\ }$. Note that we model parallel composition as functions from thread identifiers to commands, thus $P[t
    := C]$ represents the program, where the command for thread $t$ is updated to $C$.   

\begin{figure*}[!t]
\begin{center}

  $ \inference{n = eval(e, \rffun(t)) \quad a = \levWRO{i}{x}{n}}{(i\!: [x] :=^{[R]} e, \rffun)
    \whilestep{a}_t (\kwskip, \rffun) } $
  \qquad
  $ \inference{ a = \levRAO{i}{x}{n} \quad \rho' = \rffun(t)[r := n]}{(i\!: r :=^{[A]} [x], \rho)
    \whilestep{a}_t (\kwskip, \rffun[t := \rho']) } $
  \bigskip


$\text{\sc Prog}
  \inference{(P(t), \rffun) \whilestep{a}_t (C,\rffun')}
  {(P, \rffun) \whilestep{a}_t (P[t
    := C], \rffun')}$
\qquad
$\inference[\sc Full]{b = (g, a, t) \vspace{-0.25em} \\ 
(P, \rffun)
   \whilestep{a}_t (P', \rffun\,') \qquad  \sigma \strans{\ b\ } \sigma'
  } 
  {(P, (\sigma, \rffun)) \ltsArrow{b}
  (P', (\sigma', \rffun\,'))}
$
\vspace{-3mm}
\end{center}
  \caption{Interpreted operational semantics of programs (sample)}
  \label{fig:comm-sem-sample}
\end{figure*}

The rules in \cite{DBLP:conf/ppopp/DohertyDWD19} generate actions corresponding to program syntax (i.e., $\whilestep{a}_t$) in {\em program order}.\footnote{We present the rules from \cite{DBLP:conf/ppopp/DohertyDWD19} in Appendix~\ref{sec:full-op-sem}.} In this paper, we follow a different approach --- the actions will be generated by (and executed in) the order defined by \MRD. This therefore allows both reordering of program statements and relaxed write propagation.

\section{Operational Semantics over $\MRD$}
\label{sec:opsem-mrd}

\newcommand{\labfun}{\lambda_{\mu}}




Recall that the $\MRD$ semantic interpretation function $\sem{P}$ outputs a coherent event structure $\mu = (L, S, \vdash, \ppo)$. In this section, we develop an operational semantics for traversing $\mu$, while generating state configurations that model relaxed write propagation. In essence, this generalises the operational semantics in \refsec{sec:ppopp} so that threads are executed out-of-order, as allowed by $\MRD$.



\subsection{Program Futures}

\label{sec:f-trans-rel}
Defining our operational semantics over raw syntax would force us to evaluate statements in program order. Therefore we convert this syntax into a set of atomic statements. 
We call this the \emph{atomic set} of $C$, written $\pset{C}$. $\MRD$ evaluates while loops using step-indexing, treating them as finite unwindings of $\kwif$-$\kwthen$-$\kwelse$ commands. For these, we generate a fresh unique label for each iteration of the while loop for each of the atomic commands within the loop body.
Recall that the parallel composition of commands is modelled by a function from thread identifiers to (sequential) commands. The atomic set of a program $P$ is therefore $\lambda t.\ \pset{P(t)}$.  

To retain the ordering recognised during program execution, we introduce \emph{futures}, which are sets of $\MRD$ events partially ordered by the semantic dependency and preserved program order relations. Essentially, instead of taking our operational steps in program order over the syntax, we can nondeterministically execute any statement which our futures tell us we have executed all necessary predecessors of.

For an execution $S = (A, \mrdlk, \mrdrf, \mrddep)$ of an event structure $\mu$, we can construct an initial future $f = (A, \preceq)$, where  $\mathop{\preceq} = \mathop{\mrddep} \cup \mathop{\leq_{|A}}$ and $\leq_{|A}$ is the preserved program order $\ppo$ (see \refsec{sec:mrd}) restricted to events in $A$.
We say that an action $a$ is {\em available} in a future $(K, \preceq)$ iff
there exists some event $g$ with label $a$ such that
 $g \in K$ and $g$ is minimal in $\preceq$, i.e., for all events $g' \in K$, $g' \not\preceq g$.
If $a$ is available in $f$, then 
the {\em future of $a$ in $f$}, denoted $a \future f$, is the future
\begin{align*}
a \future f & \iff (K', \preceq_{|K'}) & \text{where $K' = K \setminus \{g \mid \labfun(g)  = a\}$}    
\end{align*}
We lift this to a set of futures $F$ to describe the {\em candidate
futures of $a$ in $F$}, denoted $a \future F$:
\begin{align*}
a \future F = \{a \future f \mid f \in F \wedge \text{$a$ available in $f$}\}
\end{align*}
Note that $a$ is {\em enabled} in $F$ iff $a \future F \neq \emptyset$.  

Essentially, $a \future F$ consumes an event $g$ with label $a$ from each of the futures in $F$ provided $a$ is available, discarding all futures in which $a$ is not available. 
Intuitively, if an event is minimal in a program future then it may be executed immediately. If it is not minimal, its predecessors must be executed first.

\subsection{Future-based Transition Relation}
Our operational semantics is defined by the transition relation in {\sc Future-Step} given below, which generalises {\sc Full} in \reffig{fig:comm-sem-sample}. The transition relation is defined over
$(\pset{Q}, (\sigma, \rffun), F)$, where $\pset{Q}$ is the atomic set corresponding to the program text, $(\sigma, \rffun)$ is a configuration and $F$ is a set of {\em futures}. The rule generalises {\sc Full} in the obvious way, i.e., by evolving the configuration as allowed
by $\strans{(g, a, t)}$ and $\whilestep{a}_t$ and consuming an available action $a$ in $F$. 
Below, we use $\uplus$ to denote disjoint union and $f[k := v]$ to denote function updates where $f(k)$ is updated to $v$.
$$\inference[\sc Future-Step] {
     \pset{Q}(t) = \pset{C} \uplus \{i\!:\ s\}
     \qquad
     a \future F \neq \emptyset \vspace{-0.25em} \\
     (i: s, \rffun) \whilestep{a}_t (\kwskip, \rffun')
     \qquad      \sigma \strans{(g, a, t)} \sigma'
     }{
     (\pset{Q}, (\sigma, \rffun), F)\ {\ltsArrow{\mu}}\ (\pset{Q}[t := \pset{C}], (\sigma', \rffun'), a \future F)}
$$
Once a minimal action is chosen in a step of the operational semantics, it is checked for consistency and added to the set of executed events. The set of futures is pruned to remove the chosen event, and to exclude any futures that are incompatible with the chosen event. The operational semantics continues until it has consumed all of the futures.

The following theorem establishes equivalence of our operational semantics and the $\MRD$ denotational semantics. A proof is provided in the appendix.
\newcommand{\bij}{\mathcal{E}}
\begin{theorem}[Soundness and Completeness]
Every execution generated by the $\MRD$ model can be generated by the operational semantics, and every final state generated by the operational semantics corresponds to a complete execution of the $\MRD$ semantics.
\end{theorem}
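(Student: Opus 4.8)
The plan is to prove the two directions separately, since each is an inclusion between the set of complete executions allowed by $\MRD$ and the set of terminal configurations reachable by $\ltsArrow{\mu}$. The central technical object is the correspondence between an $\MRD$ execution $S = (A, \mrdlk, \mrdrf, \mrddep)$ and the traces of $\ltsArrow{\mu}$ that start from the initial future set $F_0$ built from the executions of $\mu$. For the \emph{soundness} (``$\MRD$ execution $\Rightarrow$ operational trace'') direction, I would fix a complete $\MRD$ execution $S$, form its initial future $f_S = (A, \preceq)$ with $\preceq \;=\; \mrddep \cup \leq_{|A}$, and argue that any linearisation of $\preceq$ yields an order in which the actions can be fired by {\sc Future-Step}. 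The key lemma is that $\preceq$-minimality of an event coincides with its $a$ being \emph{available} in the future, so that at every step at least one action of the as-yet-unexecuted part of $A$ is enabled; one then shows by induction on the length of the linearisation that the memory-state side ($\strans{(g,a,t)}$) can always match, i.e.\ that the $\mrdrf$ edge prescribed by $S$ is permitted because the write in question lies in $\OW_\sigma(t)$ at the point the read is fired. This last point is exactly where the $\MRD$ coherence axioms (which the excerpt defers to) are used: they guarantee $\mrdrf$, $\mrddep$ and the induced $\ltmo$ are mutually consistent, so the observable-writes condition is met.

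For the \emph{completeness} (``operational trace $\Rightarrow$ complete $\MRD$ execution'') direction, I would take a maximal (terminating) trace of $\ltsArrow{\mu}$ from $(\pset{Q}, (\sigma_0,\rffun_0), F_0)$, collect the sequence of fired tagged actions $(g_1,a_1,t_1),\dots,(g_n,a_n,t_n)$, and read off a candidate execution $S$ whose event set $A$ is $\{g_1,\dots,g_n\}$, whose $\mrdrf$ is the accumulated reads-from relation in the final $\sigma$, whose $\ltmo$ is the final modification order, and whose $\mrddep$ is inherited from $\mu$. I would then verify: (i) termination of the trace forces $F$ to become a set of empty futures, which by the definition of $a \future F$ and the construction of $F_0$ from the executions of $\mu$ means the fired events constitute exactly the event set $A$ of some execution $S \in \mathit{S}_\mu$; (ii) the order of firing respects $\preceq$ (because {\sc Future-Step} only fires available, i.e.\ $\preceq$-minimal, actions), so $\mrddep$ and $\ppo$ are honoured; (iii) every read in $A$ is $\mrdrf$-linked to a same-location, same-value write, i.e.\ the execution is \emph{complete}, which follows because the {\sc Read} rule always installs an $\mrdrf$ edge from an observable write of the matching location and value; and (iv) the resulting tuple satisfies the $\MRD$ consistency axioms — this is discharged by translating the operational invariants on $\lthb$, $\lteco$, $\ltfr$ and the observable-writes machinery into the declarative coherence conditions, reusing the equivalence between this operational memory model and its declarative counterpart from Doherty et al.

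I would organise both directions around a single \emph{simulation invariant} relating an operational configuration $(\pset{Q}, (\sigma,\rffun), F)$ to a ``partial $\MRD$ execution'': namely that $D_\sigma$ is the set of already-consumed events, that $F$ is precisely $\{(A_j \setminus D_\sigma, \preceq_j{}_{|A_j \setminus D_\sigma}) \mid S_j \in \mathit{S}_\mu,\ D_\sigma \text{ is a }\preceq_j\text{-downset of }A_j\}$, and that $\ltrf_\sigma$, $\ltmo_\sigma$, $\rffun$ agree with $S_j$ on $D_\sigma$ for each surviving $j$. Soundness is then ``the invariant is preserved and each $\MRD$ execution keeps a surviving future until fully consumed'', completeness is ``the invariant is preserved and a terminal configuration pins down a unique completed $S_j$''.

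\textbf{Main obstacle.} The delicate step is matching the memory side: showing that whenever $\MRD$'s $\mrdrf$ says read $g$ reads from write $w$, the operational semantics can legitimately fire $g$ reading from $w$, i.e.\ $w \in \OW_\sigma(t)$ at that moment — and conversely that every $\mrdrf$ edge the operational semantics is free to install corresponds to a coherence-consistent choice in some $\MRD$ execution. This requires carefully aligning the operationally-constructed $\ltmo$ (which is built incrementally by the $\ltmo[w,b]$ insertion, and depends on the \emph{order} in which writes are fired) with the $\ltmo$ that $\MRD$'s axioms associate to the execution, and checking that the relaxed, per-thread observable-writes view never excludes a read that $\MRD$ permits nor admits one it forbids. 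Handling non-multi-copy-atomicity here — the fact that $\lteco$ and $\lthb$ can legitimately diverge across threads — is what makes this more than a bookkeeping exercise, and I expect the bulk of the appendix proof to be spent on exactly this correspondence, most likely via an auxiliary lemma that the reachable $(\ltsb,\ltrf,\ltmo)$ triples are exactly the consistent tagged action graphs whose $\ltsb$ is generated by $\preceq$.
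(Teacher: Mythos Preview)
Your two-direction decomposition and the idea of proving the $\MRD\!\to$operational direction by linearising the execution and replaying it step by step both match the paper. There are, however, two concrete divergences worth flagging.

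\smallskip
\noindent\textbf{Soundness direction (MRD $\Rightarrow$ operational).}
You propose to linearise $\preceq \;=\; \mrddep \cup \leq_{|A}$ and claim that \emph{any} such linearisation can be replayed. That is too weak and, as stated, not correct: a $\preceq$-linearisation may place a read before the write it $\mrdrf$-reads from, and then the {\sc Read} rule cannot fire with the intended $\ltrf$ edge. The paper instead linearises the larger relation $R = \mrddep \cup \mrdrf \cup \ppo$, whose acyclicity is exactly what the $\MRD$ axiomatic check guarantees, so that every read's write is already in $D_\sigma$ when the read is scheduled. Even that is not enough: the write must be in $\OW_\sigma(t)$, i.e.\ not $\ltmo$-covered by another encountered write. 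The paper handles this with a separate reordering lemma (their Lemma on choosing an alternative $R'$): if some $w''$ with $(w,w'')\in\ltmo$ blocks observability, then either $w \mathrel{R^*} w''$---in which case the $\MRD$ coherence axioms already forbid the $\mrdrf$ edge $w\to e_n$---or $w''$ is only $R'$-before the read, and one can pick a different extension $R''$ that delays $w''$ past the read without disturbing anything fixed by $R$. This ``swap the arbitrary part of the linearisation'' lemma is the actual technical core; your plan gestures at the difficulty in the obstacle paragraph but does not supply a mechanism for it, and the appeal to Doherty et al.'s equivalence does not directly help, since that result is stated for $\ltsb$ equal to program order, not semantic dependency.

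\smallskip
\noindent\textbf{Completeness direction (operational $\Rightarrow$ MRD).}
Here your plan is sound but heavier than what the paper does. Rather than maintaining a simulation invariant over all surviving futures, the paper argues directly by contrapositive on the final state: (i) the fired event set must equal some execution's $A$, because futures are drawn from executions and the step relation only consumes events present in a future; and (ii) among the executions with that event set, at least one has the same $\mrdrf$, because $\MRD$ exhaustively generates all $\mrdrf$ choices that keep $\mrdrf\cup\mrddep\cup\ppo$ acyclic, and the operational rule cannot install an $\ltrf$ edge that closes such a cycle (the write would have to be $\preceq$-maximal relative to the read, contradicting availability). Your invariant-based route would also work, but the paper's argument is shorter and avoids tracking the full set of surviving futures.
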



\subsection{Example}
Recall the program in \reffig{fig:lb-sdep} and its event structure representation in \reffig{fig:ces}. Let $\Delta_S = \{(x, x) \mid x \in S\}$ be the diagonal of set $S$. We first derive our set of futures from this structure:

\begin{center}
\begin{tabular}{l@{\qquad}l@{\qquad}l}
    $\{2 \prec 3, 6, 7\}$ & $\{2 \prec 3, 8, 9\}$ & $\{2 \prec 3, 10, 11\}$\\    
    $\{4 \prec 5, 6, 7\}$ & $\{4 \prec 5, 8, 9\}$ & $\{4 \prec 5, 10, 11\}$
\end{tabular}
\end{center}
where $\{2 \prec 3, 6, 7\}$ represents the future $(\{2,3,6,7\}, 2 \prec 3 \cup \Delta_{\{2,3,6,7\}})$.
The atomic set of the program is 
$$\pset{P} = \left\{
\begin{array}[c]{l}
1\mapsto \{{\tt 1: r1 := [x]}, {\tt 2: [y] := r1 + 1}\}, \\
2 \mapsto \{{\tt 3: r2 := [y]}, {\tt 4: [x] := 1}\}
\end{array}
\right\}$$
The initial configuration is $(\sigma_0, \gamma_0)$, where 
$\sigma_0 = (\{(0_x, \levW{0}{x}{0}, 0), (0_y, \levW{0}{y}{0}, 0)\}, $ $  \emptyset, \emptyset, \emptyset)$ and $\gamma_0 = \{1 \mapsto \{r1 \mapsto 0\}, 2 \mapsto \{r2 \mapsto 0\}\}$, assuming the initialising thread has identifier~0.


To find out which events we can execute, we check our futures set.
We cannot execute events 3 or 5, which both have pre-requisite events that have not yet been executed. These events are the only available events generated by line 2, so we cannot attempt to execute line 2. We can, however, execute any other line.
Suppose we execute line 4, which corresponds to ${\tt 4: [x] := 1}$ in $\pset{P}(2)$. 
To use the transition relation $\ltsArrow{\mu}$, we need to do the following: (1) determine the corresponding action, $a$ and new thread-local state using $\whilestep{a}_2$; (2) generate a tagged action $b = (g, a, 2)$ for a fresh tag $g$ and a new global state using $\strans{b}$; and (3) check that $a$ is available in the current set of futures. 

For (1), we can only create one action $a = \levW{4}{x}{1}$ and the local state is unchanged. For (2), we generate a new global state using the {\sc Write} rule in \reffig{fig:c11-opsem} (full details elided). For (3), we take our candidate futures $a \future F$ by examining which $\MRD$ events have the label $\levW{4}{x}{1}$ --- in this case events 7, 9, and 11. All futures can execute one of these events so the new future set contains all futures in $F$, each minus the set $\{7, 9, 11\}$.







\newcommand{\dumact}{\textsc{cond}}
\newcommand{\governs}{\textsc{governs}}
\newcommand{\testa}{\textsc{test}}

\OMIT{
\section{Control Location Orders}

We now present two syntactic (strict, partial) orders on the control locations
of a given program. These orders are quite straightforward,
and only of technical use. ({\bf TODO:} therefore, this stuff is likely
a good candidate for appendix material.)

Each of our orders is defined on the control locations
each thread in given a program, via induction on the
structure of the program. Fix a program $Pr$.

\begin{figure*}[t]
  \centering \small
  \begin{align}
    \ctrlpo(i : c) &= (\{i\}, \emptyset)\\
    \ctrlpo(P : Q) &= \ctrlpo(P) \otimes \ctrlpo(Q)\\
    \ctrlpo(i : if~B~P~Q) &= ((\{i\}, \emptyset) \otimes \ctrlpo(P)) \sqcup
                             ((\{i\}, \emptyset) \otimes \ctrlpo(Q))
  \end{align}
  \caption{The control-location program order $\ctrlpo_t$.
  Here, $(C, R) \otimes (D, S) = (C \cup D, R \cup S \cup (R \times S))$.
  Further, $(C, R) \sqcup (D, S) = (C \cup D, R \cup S)$. In both
  equations, we assume that $C \cap D = \emptyset$.}
  \label{fig:ctrlpo}
\end{figure*}

We call the first order the {\em control-location program order}.
This is a strict order $(X, R)$ where $X$ is a set of control locations
and $R$ is an irreflexive, transitive relation on $X$.
The control-location program order of a thread $t$ is denoted
$\ctrlpo(P)$, and is defined in Figure \ref{fig:ctrlpo}. Intuitively,
$\ctrlpo(P)$ is the order on control locations induced by sequential
execution of $P$.

We call the second relation the {\em condition dependency order}.
The control dependency order is a simple relation over control locations.
The control-location program order of a thread $t$ is denoted
$\condo_t$, and is defined in Figure \ref{fig:condo}.

\begin{figure*}[t]
  \centering \small
  \begin{align}
    \condo(i : c) &= \emptyset\\
    \condo(P : Q) &= \condo(P) \cup \condo(Q)\\
    \condo(i : if~B~P~Q) &= (\{i\}, \{i\} \otimes \ctrlocs(P) \cup \ctrlocs(Q))
  \end{align}
  \caption{The condition dependency order $\condo_t$.
  Here, $\ctrlocs(P)$ is the set of locations occurring in the program $P$.}
  \label{fig:condo}
\end{figure*}
} 

\OMIT{
\section{Semantics with dummy tests}

A {\em program future} in $C$ is a tuple 
$(X, T, \dep, \eppo)$ where $X$ is a set of actions, $T$ is a set of tests,
$\dep$ is a partial order on $X$ called the {\em dependency order},
and $\eppo$ is a partial order on $X \cup T$ called the
{\em extended program order}.

Given a coherent event structure $C$, we define
the {\em futures of $C$} to be the set of futures
$f = (X, T, \dep, \eppo)$ that satisfy the following conditions.
({\bf TODO:} explain where the tests come from.)
There is some execution in $S_C$ with dependency order $\mrddep$ 
and preserved program order $\ppo$ such that
\begin{itemize}
\item $X \subseteq \support(\mrddep)$, where $\support(R) = \domain(R) \cup \range(R)$ for a relation $R$, 
\item $X$ is up-closed with respect to $\mrddep$, and 
\item $\dep{} ={} \mrddep_{|X}$, where $R_{|S} = R \cap (S \times S)$ for a relation $R$ and set $S$.
\item Letting
\begin{align}
    R = \{(a, c), (c, b) \mid \exists t. (a, b) \in \mrdep \wedge (\cloc(a), \cloc(c)) \in \ctrlpo(Pr_t) \wedge (\cloc(c), \cloc(b)) \in \condo(Pr_t) \}
\end{align}
let
\begin{align}
\eppo = (R \cup \ppo_{|X})^+
\end{align}
\end{itemize}

\begin{proposition}[Dependency-through-conditional Uniformity]
\label{prop:dep-cond-unif}
For any program $Pr$, for every execution $S_C$ of $Pr$,
with dependency order $\mrddep$,
we have the following property. For every conditional
control location $c$ and all actions $a, b$, if $(a, c) \in \ctrlpo(Pr_t)$,
$(c, b) \in \condo(Pr_t)$ and $(a, b) \in \mrddep$, then for all actions
$b' \neq b$ such that $(c, b') \in \condo(Pr_t)$, $(a, b') \in \mrddep$.
\end{proposition}
\begin{proof}
{\bf TODO: } This proof depends on the MRD semantics. It may be best
to provide a reformulation in ``MRD native'' terms, but I'm not sure what
that would look like.
\end{proof}

\begin{lemma}
For any program $Pr$, coherent event structure $C$ in the denotation
of $Pr$, and any future $(X, T, \dep, \eppo)$ of $C$,
$(\eppo \cup \mrddep_C)_{|X} = \ppo_C \cup \mrddep_C$.
\end{lemma}
\begin{proof}
The right-to-left inclusion is straightforward, for
$\ppo_C \subseteq \eppo$.

Before proving the left-to-right inclusion,
we first prove the following.
For every execution $S_C$ of $C$,
with dependency order $\mrddep$,
and for every conditional
control location $c$ and all actions $a, b$, if $(a, c) \in R$,
$(c, b) \in R$ then $(a, b) \in \mrddep$.

We must consider paths $p$, through the relation $R \cup \ppo \cup \mrddep$.
Let $p$ be a n on-empty path: $first(p)$ is the
first element of $p$, $last(p)$ is the last element of $p$
and, for non-singleton paths, $last^{-}(p)$ is the last-but-one
element of $p$. First observe that for any such non-empty, non-singleton
path such that $first(p) \in X$, if $last(p)$ is a test, then
there exists a $b \in X$ such that the following hold
\begin{align}
last^{-}(p) &\in X \label{eppo-snd-R-step-1}\\
(last^{-}(p), last(p)) &\in \ctrlpo(Pr_t) \label{eppo-snd-R-step-2}\\
(last^{-}(p), b) &\in \mrddep \label{eppo-snd-R-step-3}\\
(last(p), b) &\in \condo(Pr_t) \label{eppo-snd-R-step-4}
\end{align}
This all follows from the fact that a test can only be added
to a path by taking an $R$-step, and from the definition of $R$.

Now, we prove that for each non-empty, non-singleton
path $p$ through $R \cup \ppo \cup \mrddep$,
such that the first element of $p$ is an action in $X$, if
the last element of $p$ is also in $X$ then for every action $a \in X$
in the path $p$ prior to the last element,
$(a, last(p)) \in \mrddep$, and if the last element of $p$
is a test action then for every action $b$ such that
$(last(p), b) \in \condo(Pr_t)$, $(last^{-}(p), b) \in \mrddep$.
This is sufficient. ({\bf TODO}: expand.)

We prove the lemma by induction on the length of each path through
$R \cup \ppo \cup \mrddep$. The base case, where $length(p) = 1$,
is trivial because our induction hypothesis does not constrain
singleton paths. For the induction step let $a$ be the new action or
test that we are adding to the path and let $p' = p \cdot a$,
so that $last(p') = a$ and $last^{-}(p') = last(p)$.
There are three cases to consider.

{\bf Case 1:} $(last(p), a) \in \ppo \cup \mrddep$. In this case,
$last(p) \in X$ and $a \in X$, and thus by the induction hypothesis,
$(a', last(p)) \in \mrddep$ for every $a' \in X$ prior to $last(p)$,
and so $(a', a) \in \mrddep$ by transitivity of $\mrddep$.

{\bf Case 2:} $(last(p), a) \in R$ and $a$ is a test.
In this case, $(last(p), a) \in \ctrlpo(Pr_t)$, by definition of $R$.
We must show that for all actions $b'$
 satisfying $(last(p), b') \in \condo(Pr_t)$, $(last^{-}(p), b') \in \mrddep$
Because $a$ is a test, there is some action $b$ such that $p'$
and $b$ satisfy properties \ref{eppo-snd-R-step-1}-\ref{eppo-snd-R-step-4} above.
If $b' = b$, then property \ref{eppo-snd-R-step-3} is sufficient.
Otherwise, we apply Proposition \ref{prop:dep-cond-unif} to show that 
$(last^{-}(p), b') \in \mrddep$. {\bf TODO} be explicit about hypotheses.

{\bf Case 3:} $(last(p), a) \in R$ and $a$ is an action in $X$.
In this case, $last(p)$ is a test action and therefore
our induction hypothesis guarantees that $(last^{-}(p), a) \in \mrddep$,
and so our result follows by induction as with Case 1.

\end{proof}

\begin{lemma}
For any program $Pr$, coherent event structure $C$ in the denotation
of $Pr$, and any future $(X, T, \dep, \eppo)$ of $C$,
$\eppo$ is acyclic.
\end{lemma}
\begin{proof}
Because of previous lemma, and the acyclicity of $\mrdep$,
a cycle cannot cannot contain an actions, only tests, but tests
are acyclically related by the $\ctrlord$ and $\governs$ relations.
\end{proof}

\begin{lemma}
Because of the previous-but-one lemma, following $\dep \cup \eppo$
does not rule out any linearisations of $\mrdep \cup \ppo$.
\end{lemma}

Our operational semantics uses the set of {\em futures}
of a program to represent it's behaviour. Our futures
must contain a representation of the conditional control structure
of the program, so that the operational semantics
can evaluate conditions at appropriate points in an execution.
To do this, we introduce a notion of {\em assertion order}.
An {\em assertion order} is a triple $(X, D, \dumact, \governs)$,
where $X$ is a set of memory actions (as before),
where $D$ is a set of {\em assert actions}
(explained below)
such that each control location occurs at most once in $D$,
$\dumact$ is a partial order on $D$, and $\governs \subseteq D \times X$
is a {\em governing relation} (TODO name??) that relates
assertions to the memory actions, both reads and writes (again see below).
The partial order on $D$ and the governing relation are
derived syntactically from the program order of a given program.

An {\em assert action} is an action of the form
$i: \testa(B)$, where $i$ is a control location
and $B$ is a boolean expression on register files.
Let $Pr$ be a program, $t$ be a thread, $i$ be the control location
of a condition in $Pr_t$, and let $B$ be the condition
of $i$. An {\em assertion for $i$ in $Pr_t$} is an assertion
of the form $i: \testa(B)$ or $i: \testa(\neg B)$.

The governing relation associates assertions with the writes that they
control. The following outlines what we need to do.
Consider the following pattern:
\begin{verbatim}
    i: if (B) S T
\end{verbatim}
Intuitively, we are compiling every path through such
a pattern into an assertion orders. The path where
the test $B$ succeeds is compiled to an assertion
order containing the assertion $i: \testa(B)$,
and the governing relation contains the pair
$(i: \testa(B), a)$ for every action $a$ in the set
$X$ of the assertion order whose control location appears
in $S$. Likewise, the path where
the test $B$ fails is compiled to an assertion
order containing the assertion $i: \testa(\neg B)$,
and the governing relation contains the pair
$(i: \testa(B), a)$ for every action $a$ in the set
whose control location appears in $T$.

A future is a tuple $f = (X, \dep, \ppo, D, \dumact, \governs)$ where
$X, \dep, \ppo$ are as before, and $(X, D, \dumact, \governs)$
is an assertion order. (We generate the triples $(X, \dep, \ppo)$ as before.)

Fix a future $f = (X, \dep, \ppo, D, \dumact, \governs)$. We
say an assertion action $c \in D$ is {\em enabled} in $f$
if $c$ is minimal in $\dumact$ and
there is some action $a \in X$ that is minimal
in $\mathop{\dep}\cup\mathop{\ppo}$
and $(c, a) \in \governs$.
If $c \in D$ is enabled in $f$, then 
the future of $c$ in $f$, denoted $c \future f$ is the future
\begin{align}
(X, \dep, \ppo, D', \restr{\dumact}{D'}, \restr{\dumact}{D' \cup X})
\end{align}
where $D' = D - \{c\}$.

We say that an action $a \in X$ is
{\em enabled} in future $f$ if 
$a$ is minimal in $\mathop{\dep}\cup\mathop{\ppo}$
and there is no $c \in D$ such that $(c, a) \in \governs$.
If $a$ is enabled in $f$, then 
the future of $a$ in $f$, denoted $a \future f$ is the future
\begin{align}
(Y, \restr{\dep}{Y}, \restr{\ppo}{Y}, D, \dumact, \governs)
\end{align}
where $Y = X - \{a\}$.

Given a set of futures $F$ and an action $a$, the {\em candidate
futures of $a$ in $F$}, denoted $a \future F$, is the set
\begin{align}
a \future F = \{a \future f \mid f \in F \wedge \text{$a$ enabled in $f$}\}
\end{align}
Essentially, this consumes an action $a$ from each of the futures in $F$.

\begin{figure*}[t]
  \centering \small
  $ \inference{ 
  a = i\!: rd^{\sf [A]}(x,n) \quad  F \futrel{a} F'}
  {(\rho, F) \whilestep{a} (\rho[r := n], F') } $
  \medskip 
  
  $ \inference{ 
  a = i\!: wr^{\sf [R]}(x,[\![E]\!]_\rho) \quad F \futrel{a} F'}
  {(\rho, F) \whilestep{a} (\rho, F') } $
  \medskip 

  $ \inference
  {a = \text{assert}(B) \quad [\![B]\!]_\rho = \True \quad F \futrel{a} F'}
  {(\rho, F) \whilestep{\tau} (\rho, F')} $ 
\ \   
  $ \inference
  {a = \text{assert}(\neg B) \quad [\![B]\!]_\rho = \False \quad F \futrel{a} F'}
  {(\rho, F) \whilestep{\tau} (\rho, F')} $ 
  \medskip
  \ \ \ $\text{\sc Prog}
  \inference{(\widehat{C}(t), \widehat{\rho}(t), \widehat{F}(t))  \whilestep{a} (C, \rho, F)} {(\widehat{C}, \widehat{\rho}, \widehat{F}) \whilestep{a}_t (\widehat{C}[t
    \mapsto C], \widehat{\rho}[t
    \mapsto \rho], \widehat{F}[t
    \mapsto F])}$
    
  \caption{Alternative out-of-order semantics.}
  \label{fig:comm-sem2}
\end{figure*}
$(\rho, F)$

}



\section{Hoare Logic and Owicki-Gries Reasoning}
\label{sec:hoare-logic}

Recall that the standard Owicki-Gries methodology~\cite{DBLP:journals/acta/OwickiG76} decomposes proofs of parallel programs into two cases:
\begin{itemize}
\item {\em local correctness} conditions, which define correctness of an assertion with respect to an individual thread, and  
\item {\em non-interference} conditions, which ensures stability of an assertion under the execution of statements in other threads.
\end{itemize}
The standard Owicki-Gries methodology has been  shown to be applicable to a weak memory setting with relaxed write propagation (but without relaxed program order)~\cite{DBLP:conf/ecoop/DalvandiDDW19}. Note that an alternative characterisation (also in a model without relaxed program order) has been given in \cite{LahavV15}.\footnote{This characterisation uses standard assertions but assumes a non-standard interpretation of Hoare-triples and introduces a stronger interference freedom check. In fact, for the model in \cite{LahavV15}, the introduction of auxiliary variables is unsound.}
Unfortunately, the unrestricted  C11 semantics captured by $\MRD$ relaxes program order and hence sequential composition in order to allow behaviours such as those in \reffig{fig:lb-sdep}, which requires a fundamental shift in Hoare-style proof decomposition. We show that the modular Owicki-Gries rules, for reasoning about concurrent threads remain unchanged.

Like Dalvandi et al~\cite{DBLP:conf/ecoop/DalvandiDDW19}, we assume assertions are predicates over state configurations. 
In the current paper, the operational semantics is dictated by the set of futures generated by $\MRD$, thus we require two modifications to the classical meaning. First, like prior work \cite{DBLP:conf/ecoop/DalvandiDDW19}, we assume predicates are over state configurations, which include (local) register files and (shared) event graphs. This takes into account relaxed write propagation. Second, we introduce Hoare-triples with futures generated by $\MRD$, which takes into account relaxed program execution. 

In the development below, we refer to the {\em preprocessed form} of a program $P$ given by $\preprocess(P) = (\mu, \pset{P})$, where $\mu$ is the coherent event structure $\sem{P}$ and  $\pset{P}$ the set normal form of $P$. We let $F_\mu$ be the initial set of futures corresponding to $\mu$.   

\begin{definition}[Hoare triple]
\label{def:hoare-triple-1}
Suppose $X$ and $Y$ are predicates over state configurations,  $P$ is a concurrent program, and $\preprocess(P) = (\mu, \pset{P})$. The semantics of a Hoare triple is given by $\{X\}\ {\it Init} ; P\ \{Y\}$, where 
\begin{align*}
 \{X\}\ {\it Init} ; P\ \{Y\}
 \ \ \sdef \ \  & 
    \begin{array}[t]{@{}l@{}}
         ({\it Init} \imp  X)  \wedge {} \\
      (\forall \sconfigc, \sconfigc'.\ 
      X(\sconfigc)  \wedge {}
      ((\pset{P}, \sconfigc, F_\mu) \ltsArrow{\mu}^* (\emptyset, \sconfigc', \emptyset))
      \ \imp \  Y(\sconfigc'))
    \end{array}
\end{align*}
\end{definition}
 That is, $\{X\}\ P\ \{Y\}$ holds iff for every pair of state configurations $\sconfigc$, $\sconfigc'$, assuming $X(\sconfigc)$ holds and we execute  $\pset{P}$ with respect to the future $F_\mu$ until $\pset{P}$ terminates in $\sconfigc'$, we have $Y(\sconfigc')$. 



\label{sec:proof-method}

Although \refdef{def:hoare-triple-1} provides  meaning for a Hoare triple, we still require a method for decomposing the proof outline. To this end, we introduce the concept of a {\em future predicate}, which is a predicate parameterised by both futures and configuration states. 
To make use of future predicates, we introduce a notion of a Hoare triple for programs in set normal form.

For the definitions below, assume $P$ is a program and $\mrd(P) = (\mu, \pset{P})$.
We say an atomic set $\pset{Q}$ is a {\em sub-program} of an atomic set $\pset{P}$ iff for all $t$, $\pset{Q}(t) \subseteq \pset{P}(t)$. 
We say a set of futures $F'$ is a {\em sub-future} of set of futures $F$ iff for each $f' \in F'$ there exists an $f \in F$ such that $f'$ is an up-closed subset of $f$. For example, if $F = \{\{1 \prec 2, 3, 4\}, \{1 \prec 2, 3\}\}$, then $\{\{2, 4\}, \{1 \prec 2\}\}$ is a sub-future of $F$,  but $\{\{1, 3\}\}$ is not. We say the sub-program $\pset{Q}$ corresponds to a sub-future $F$ iff  $labels(\pset{Q}) = labels(\{\mrdlab_\mu [\pi_1 f] \mid f \in F\})$, where we assume $labels$ returns the set of all labels of its argument, $\pi_1$ is the project of the first component of the given argument, and $R[S]$ is the relational image of set $S$ over relation $R$. 








\begin{definition}[Hoare triple (single step)]
\label{def:hoare-triple}
Suppose $I$ and $I'$ are future predicates, $P$ is a program and $\mrd(P) = (\mu, \pset{P})$. If $G$ is a sub-future of $F_\mu$, corresponding to a sub-program $\pset{Q}$ of $\pset{P}$, we define 
\begin{align*}
 & \{I\}_G\ \pset{Q}\ \{I'\} \\ 
 \sdef\ & 
     \forall \sconfigc, \sconfigc', G'.\ 
      I(G)(\sconfigc)  \wedge 
      ((\pset{Q}, \sconfigc, G) \ltsArrow{\mu} (\pset{Q}', \sconfigc', G'))
      \ \imp \ I'(G')(\sconfigc') 
\end{align*}
\end{definition}
We say $I$ is {\em future stable} for $(F, \pset{P})$ iff for all sub-futures $G$ of $F$ with corresponding sub-programs $\pset{Q}$ of $\pset{P}$, we have $\{I\}_G\ \pset{Q}\ \{I\}$.

\begin{lemma}[Invariant]
Suppose $X$ and $Y$ are configuration-state predicates, $P$ is a program and $\mrd(P) = (\mu, \pset{P})$. If $X \imp I(F_\mu)$, $I(\emptyset) \imp Y$ and $I$ is future stable for $(F_\mu, \pset{P})$, then 
$\{X\} {\it Init}; P \{Y\}$ provided ${\it Init} \imp X$.
\end{lemma}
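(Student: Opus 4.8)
The plan is to unfold \refdef{def:hoare-triple-1} and reduce the goal to an invariant about every configuration reachable along an execution, which is then established by induction on the number of $\ltsArrow{\mu}$-steps. The triple $\{X\}\ {\it Init}; P\ \{Y\}$ is the conjunction of ${\it Init} \imp X$ --- which is exactly the stated proviso --- and the claim that for all $\sconfigc, \sconfigc'$ with $X(\sconfigc)$ and $(\pset{P}, \sconfigc, F_\mu) \ltsArrow{\mu}^* (\emptyset, \sconfigc', \emptyset)$ we have $Y(\sconfigc')$. So fix such $\sconfigc$, $\sconfigc'$ together with a derivation of the multi-step transition; it suffices to reach $Y(\sconfigc')$.

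The core claim, proved by induction on the length of a prefix of that derivation, is: every configuration $(\pset{Q}, \sconfigc'', G)$ reachable from $(\pset{P}, \sconfigc, F_\mu)$ via $\ltsArrow{\mu}^*$ satisfies (i) $G$ is a sub-future of $F_\mu$ and $\pset{Q}$ is a sub-program of $\pset{P}$ corresponding to $G$, and (ii) $I(G)(\sconfigc'')$. For the base case, $F_\mu$ is an up-closed subset of itself, hence a sub-future of $F_\mu$, with corresponding sub-program $\pset{P}$ by construction of $\mrd(P)$; and $I(F_\mu)(\sconfigc)$ holds since $X(\sconfigc)$ and $X \imp I(F_\mu)$. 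For the inductive step, suppose $(\pset{Q}, \sconfigc'', G)$ satisfies (i) and (ii) and steps to $(\pset{Q}', \sconfigc''', G')$ via $\ltsArrow{\mu}$. Property (ii) transfers immediately: since $G$ is a sub-future of $F_\mu$ with corresponding sub-program $\pset{Q}$, future stability of $I$ for $(F_\mu, \pset{P})$ gives $\{I\}_G\ \pset{Q}\ \{I\}$, and instantiating \refdef{def:hoare-triple} with the present step and $I(G)(\sconfigc'')$ yields $I(G')(\sconfigc''')$.

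It remains to transfer property (i) across the step, i.e., to show that one application of {\sc Future-Step} preserves the sub-future / corresponding-sub-program relationship; this is the technical heart. Inspecting the rule, the step deletes a single atomic statement $i\!:\ s$ from $\pset{Q}(t)$ and replaces $G$ by $a \future G$, where $a$ is the action produced by the thread-local step on $i\!:\ s$. By definition, each future in $a \future G$ is obtained from some $f \in G$ in which $a$ is available by deleting the events labelled $a$; within a future derived from a single $\MRD$ execution each action labels at most one event, which (being available) is $\preceq$-minimal, and deleting a minimal element preserves up-closedness. By the induction hypothesis $f$ is an up-closed subset of some witness $f_0 \in F_\mu$, so --- by transitivity of the relation ``is an up-closed subset of'' --- the result is again an up-closed subset of $f_0$, and hence $a \future G$ is a sub-future of $F_\mu$. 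The label condition is then bookkeeping: $a \future G$ removes exactly one occurrence of $a$ from each surviving future, which matches the deletion of $i\!:\ s$ from $\pset{Q}$, so the successor program $\pset{Q}'$ corresponds to $a \future G$. This closes the induction, and applying the invariant to the terminal configuration $(\emptyset, \sconfigc', \emptyset)$ gives $I(\emptyset)(\sconfigc')$, whence $Y(\sconfigc')$ by $I(\emptyset) \imp Y$.

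The step I expect to be the main obstacle is (i): one must be careful that pruning the futures in which $a$ is not available is harmless (``sub-future'' is an existential over members of the enclosing set, so discarding futures never breaks it) and that deleting all events carrying the label $a$ still yields an up-closed set (which relies on the one-event-per-label property of futures coming from a single $\MRD$ execution). A secondary subtlety is that future stability is stated for \emph{all} sub-futures of $F_\mu$ with corresponding sub-programs, not just the reachable ones; the induction uses exactly this extra generality to discharge (ii) uniformly at every step without carrying reachability into the hypothesis of \refdef{def:hoare-triple}.
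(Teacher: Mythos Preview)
The paper states this lemma without proof; it is treated as a routine consequence of the definitions and no argument is given in either the main text or the appendix. Your proposal supplies exactly the expected argument --- unfolding \refdef{def:hoare-triple-1}, then a straightforward induction on the length of a $\ltsArrow{\mu}$-derivation carrying the invariant $I(G)(\sconfigc'')$ together with the structural fact that the reachable $(\pset{Q},G)$ remain a corresponding sub-program/sub-future pair --- and is correct.

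Two small remarks on points you already flag. First, your ``one event per label within a single future'' claim is justified by the paper's assumption that every atomic command carries a unique control label (with fresh labels per loop unwinding), so within any single $\MRD$ execution the event with a given action label is unique; this is what makes the deletion in $a\future f$ remove a single $\preceq$-minimal element and hence preserve up-closedness. Second, the phrasing ``transitivity of `is an up-closed subset of''' is slightly loose: what you actually use is that if $f$ is up-closed in $f_0$ and you remove a $\preceq$-minimal element of $f$, the result is still up-closed in $f_0$ (because any would-be violation forces a strict predecessor of the removed element inside $f$, contradicting minimality). With that reading your step (i) goes through, and the rest is immediate.
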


By construction, the sets of futures corresponding to different threads are disjoint, i.e., for each future $f \in F_\mu$, we have that $f = \bigcup_t f_{| t}$, where $f_{|t}$ denotes the future $f$ restricted to events of thread $t$. The only possible inter-thread dependency in $\MRD$ is via the reads from relation~\cite{ESOP2020-MRD}, which does not contribute to the set of futures. This observation leads to a technique for an Owicki-Gries-like modular proof technique for decomposing the monolithic invariant $I$ into an invariant per thread. 

We define $F_{|t} = \{f_{| t} \mid f \in F\}$. Then, we obtain the following lemma for decomposing invariants in the same way as Owicki and Gries. 

\begin{lemma}[Owicki-Gries]
\label{lem:og}
For each thread $t$, let $I_t$ be a future predicate corresponding to $t$. If ${\it Init} \imp X$, $X \imp \forall t.\ I_t({F_\mu}_{| t})$, and $\forall t.\ I_t(\emptyset) \imp Y$, then $\{X\} P \{Y\}$ holds provided both of the following hold.  
\begin{enumerate}
    \item For all threads $t$, $I_t$ is future stable for $({F_\mu}_{| t}, \pset{P})$. \hfill (local correctness)

    \item For all threads $t_1$, $t_2$ such that $t_1 \neq t_2$, sub-futures $F_1$ of ${F_\mu}_{| t_1}$, and $F_2$ of ${F_\mu}_{| t_2}$, if $\pset{Q}$ corresponds to $F_2$, then  we have\footnote{Technically speaking, each instance of $I_{t_1}(F_1)$ in the Hoare-triple is a function $\lambda x.\ I_{t_1}(F_1)$.}
    $\{I_{t_1}(F_1) \wedge I_{t_2}\}_{F_2}\ \pset{Q}\ \{I_{t_1}(F_1)\} $.
    
    \hfill (global  correctness)    
\end{enumerate} 
\end{lemma}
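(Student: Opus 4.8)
The plan is to derive the Owicki-Gries lemma from the Invariant Lemma by taking the composite invariant to be $I(F) \sdef \bigwedge_t I_t(F_{|t})$ and showing that the two decomposed conditions — local correctness and global correctness — together imply that this $I$ is future stable for $(F_\mu, \pset{P})$. The three side conditions of the Invariant Lemma follow directly from the hypotheses: $X \imp I(F_\mu)$ is exactly $X \imp \forall t.\ I_t({F_\mu}_{|t})$ since $F_\mu {}_{|t} = {(F_\mu)}_{|t}$; $I(\emptyset) \imp Y$ follows from $\forall t.\ I_t(\emptyset) \imp Y$ because $\emptyset_{|t} = \emptyset$; and ${\it Init} \imp X$ is assumed. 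So the entire content of the proof is the future-stability argument.

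First I would unfold future stability of $I$: fix an arbitrary sub-future $G$ of $F_\mu$ with corresponding sub-program $\pset{Q}$, and a single operational step $(\pset{Q}, \sconfigc, G) \ltsArrow{\mu} (\pset{Q}', \sconfigc', G')$; assume $I(G)(\sconfigc)$, i.e. $I_t(G_{|t})(\sconfigc)$ for every $t$; and prove $I_t(G'_{|t})(\sconfigc')$ for every $t$. The key structural observation — which the paragraph before the lemma supplies — is that futures decompose threadwise: $G = \bigcup_t G_{|t}$ and the step, being a {\sc Future-Step} instance, acts on exactly one thread, say $t_0$ (the thread of the chosen action $a$). Hence $G'_{|t} = G_{|t}$ and $\pset{Q}'_{|t} = \pset{Q}_{|t}$ for all $t \neq t_0$, and only the $t_0$-component is modified; moreover the modification of $G_{|t_0}$ is itself a legitimate sub-future step restricted to thread $t_0$. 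I would split into two cases. For $t = t_0$: the restriction of the {\sc Future-Step} transition to thread $t_0$ is a transition of the shape required by \refdef{def:hoare-triple} over the sub-future $G_{|t_0}$ of ${F_\mu}_{|t_0}$, so local correctness (condition 1), namely $\{I_{t_0}\}_{G_{|t_0}}\ \pset{Q}_{|t_0}\ \{I_{t_0}\}$, yields $I_{t_0}(G'_{|t_0})(\sconfigc')$ from $I_{t_0}(G_{|t_0})(\sconfigc)$. For $t \neq t_0$: the future component $G_{|t}$ is unchanged, so I must show $I_t(G_{|t})(\sconfigc')$ from $I_t(G_{|t})(\sconfigc)$ plus $I_{t_0}(G_{|t_0})(\sconfigc)$; this is precisely an instance of global correctness (condition 2) with $t_1 = t$, $t_2 = t_0$, $F_1 = G_{|t}$, $F_2 = G_{|t_0}$, and $\pset{Q} = \pset{Q}_{|t_0}$ — note the Hoare triple there uses $G_{|t_0}$ as the index and leaves $I_t(G_{|t})$ as the trivially-indexed ($\lambda x$-wrapped) predicate, matching the footnote. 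Collecting both cases gives $I(G')(\sconfigc')$, establishing future stability, and the Invariant Lemma then delivers $\{X\}\,{\it Init}; P\,\{Y\}$; dropping the ${\it Init}$ prefix (as the statement writes $\{X\} P \{Y\}$) is the usual abbreviation given ${\it Init} \imp X$.

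The main obstacle I anticipate is the bookkeeping around the threadwise decomposition of futures and sub-programs: one must check that if $G$ is a sub-future of $F_\mu$ with corresponding sub-program $\pset{Q}$, then each $G_{|t}$ is a sub-future of ${F_\mu}_{|t}$ with corresponding sub-program $\pset{Q}(t)$ (using that $\mrdlab_\mu$ respects the thread partition and that ``corresponds'' is a statement about label sets), and that the single {\sc Future-Step} transition genuinely restricts to a transition of the $t_0$-projection and is a no-op on every other projection — in particular that $a \future G$ restricted to thread $t$ equals $G_{|t}$ for $t \ne t_0$, which rests on the disjointness claim that consuming an event labelled $a$ (a $t_0$-event) never removes events of another thread from any future. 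Once those projection facts are nailed down, the two cases are immediate appeals to conditions 1 and 2; I would therefore isolate the projection facts as a short preliminary observation (or cite the paragraph preceding the lemma) and keep the main argument to the two-case split above.
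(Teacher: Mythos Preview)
The paper does not actually give a proof of this lemma: after stating it, the text only restates the proof obligations informally. Your proposal --- defining the monolithic invariant $I(F) \sdef \bigwedge_t I_t(F_{|t})$, invoking the Invariant Lemma, and establishing future stability by a case split on whether the stepping thread $t_0$ equals the thread $t$ whose invariant is being preserved --- is precisely the argument the paper sets up through its preceding Invariant Lemma and the paragraph on threadwise disjointness of futures, so your approach is correct and matches the paper's intended (but omitted) proof.
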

Thus, we establish $\{X\} P \{Y\}$ through a series of smaller proof obligations. We require that (1) the initialisation of the program guarantees $X$, (2) whenever $X$ holds then for each thread $t$, $I_t$ holds for the initial future ${F_\mu}_{ | t}$, (3) if $I_t(\emptyset)$ holds for all $t$, then the post-condition $Y$ holds, (4) each $I_t$ is maintained by the execution of each thread, and (5) $I_t$ at each sub-future of $t$ is stable with respect to  steps of another thread. 



\section{A Verification Example}
\label{sec:verif-ex}
With the verification framework now in place, we present a correctness proof for the program in \reffig{fig:lb-sdep}. The program and invariant are given in \reffig{fig:lb-sdep-po}. First, in \refsec{sec:assertions}, we present assertions for reasoning about state-configurations.

\subsection{View-based assertions}
\label{sec:assertions}
As we can see from \reffig{fig:comm-sem-sample}, the states that we use are {\em configurations}, which are pairs of the form $(\sigma, \rffun)$, where $\sigma$ is an tagged action graph representing the shared state and $\rffun$ is mapping from threads to register files representing the local state. Like prior work~\cite{DBLP:conf/ecoop/DalvandiDDW19}, we use assertions that describe the {\em views} of each thread, recalling that due to relaxed write propagation, the views of each thread may be different. Note that the formalisation of a state in prior work is a time-stamp based semantics~\cite{DBLP:conf/ecoop/DalvandiDDW19}. Nevertheless, the principles for defining thread-view assertions also carry over to our setting of tagged action graphs.  

In this paper, the programs we consider are relatively simple, and hence, we only use two types of view assertions: {\em synchronised view}, denoted $[x = v]_t$, which holds iff both thread $t$ observes the last write to $x$ and this write updates the value of $x$ to $v$, and {\em possible view}, denoted $[ x \approx v]_t$, which holds iff $t$ may observe a write to $x$ with value $v$. Formally, we define 
\begin{align*}
    [x = v]_t (\sigma, \gamma) & \sdef \exists w.\ \OW_\sigma(t)_{|x} = \{w\} \wedge \wrval(w) = v \\
    [x \approx v]_t (\sigma, \gamma) & \sdef \exists w \in \OW_\sigma(t)_{|x}.\  \wrval(w) = v 
\end{align*}
where $\OW_\sigma(t) | x$ denotes the writes in $\OW_\sigma(t)$ restricted to the variable $x$. Recall that, by definition, for any state $\sigma$ generated by the operational semantics, the last write to each variable in $\ltmo$ order is observable to every thread. Examples of these assertions in the context of an Owicki-Gries-style proof outline is given in \reffig{fig:lb-sdep-po}. 

\subsection{Example proof}
To reduce the domain of our future predicate, we collapse the event-based futures used by the operational semantics into sets of label-based futures. An event future $F_E$ can be converted into a label future $F_L$ by applying the labelling function to all events in $F_E$. This makes the futures $\{3\}$ and $\{5\}$ equivalent, as both are instances of $\{\levW{2}{y}{2}\}$, thus $I(\{3\}) = I(\{5\})$. This isn't always a valid step, as some events which share labels may not be related by $\preceq$ in the same way. Thus, the technique can only be used if the label-based representation describes exactly the futures generated by $\MRD$, which is the case for our example. 

We describe our initial set of label futures as $\{\{1_u < 2, 3_v, 4\} \mid u \in \{0, 1\} \wedge v \in \{0,1,2\}\}$, using the notation $\{i_v\}$ to refer to an action with the line number $i$ with a read that returns the value $v$.
We verify that this is a valid step: all futures generated by $\MRD$ are described by these labels, and they do not describe any potential futures not generated by $\MRD$.
We partition this into $F = \{\{1_u < 2\} \mid u \in \{0,1\}\}$ and $G = \{\{3_v, 4\} \mid v \in \{0,1,2\}\}$ representing the future sets of threads $1$ and $2$, respectively. We let $F_1  = \{\{2\}\}$ be the future set after executing line $1$, $G_3 = \{\{4\}\}$ be the future set after executing line $3$ (reading some value for $y$), and $G_4 = \{\{3_v\} \mid v \in \{0,1, 2\}\}$ be the future set after executing line $4$. 

\begin{figure}[t]
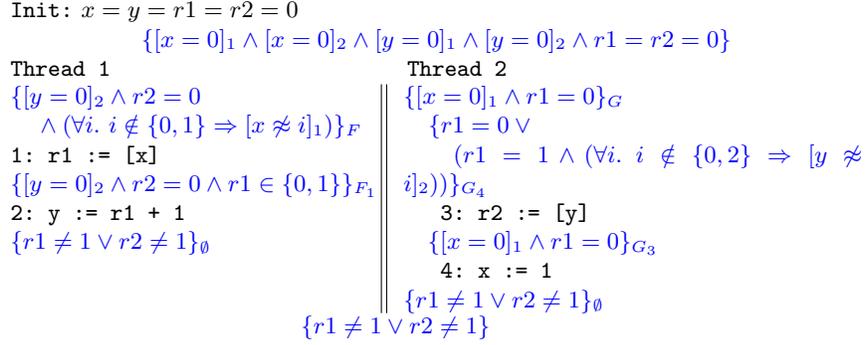


\centering

\noindent
\begin{tabular}[t]{r||@{\ \ }l}
\multicolumn{2}{l}{{\tt Init:} $x = y = r1 = r2 = 0$}\\
\multicolumn{2}{c}{\blue{\textrm{$\{[x = 0]_1 \wedge [x = 0]_2 \wedge [y = 0]_1 \wedge [y = 0]_2 \wedge r1 = r2 = 0 \}$}}}\\
\multicolumn{1}{l}{\tt Thread 1} & 
\multicolumn{1}{l}{\tt Thread 2} \\
\begin{minipage}[t]{0.4\columnwidth}\tt
 \vspace{-0.75em}
 \blue{\textrm{$\{[y = 0]_2 \wedge r2 = 0$}} \\
 \blue{\textrm{$\ \ \ {} \wedge (\forall i.\ i \notin \{0,1\} \imp [x \not\approx i]_1)\}_F$}} \\
1: r1 := [x] \\
 \blue{\textrm{$\{[y = 0]_2 \wedge r2 = 0 \wedge r1 \in \{0,1\}\}_{F_1}$}} \\
2: y := r1 + 1  \\
\blue{$\{r1 \neq 1 \lor r2 \neq 1\}_\emptyset $}
\end{minipage}
&
\begin{minipage}[t]{0.5\columnwidth} \tt 
 \vspace{-0.75em}
 \blue{\textrm{$\{[x = 0]_1 \wedge  r1= 0\}_{G}$}} \\
 \blue{\quad \textrm{${} \{r1 = 0 \lor {}$}} \\
 \blue{\quad \quad  \textrm{$(r1 = 1 \wedge (\forall i.\ i \notin \{0,2\} \imp [y \not\approx i]_2))\}_{G_4}$}} \\
 \phantom{\textrm{\quad}} 3: r2 := [y] \\
 \blue{\textrm{$\quad \{[x = 0]_1  \wedge r1 = 0\}_{G_3}$}} \\
\phantom{\textrm{\quad}} 4: x := 1  \\
\blue{$\{r1 \neq 1 \lor r2 \neq 1\}_\emptyset$}
\end{minipage}\\
\multicolumn{2}{c}{\blue{\{$r1 \neq 1 \lor r2 \neq 1$\} \qquad \quad }} 
\end{tabular}
\caption{Proof outline for load buffering with semantic dependencies}
\label{fig:lb-sdep-po}
\end{figure}
Our future predicate $I$ must output a configuration predicate for every sub-future of the initial set. Our partitioning fully describes these sub-futures, so we now attach our assertions to these futures.


To simplify the visualisation, we interleave the future predicate components with the program to provide Hoare-style pre/post-assertions, and place the assertion above a line of code if that line of code is contained in the applied future. We use indentation to denote that an assertion applies to more than one future. In this example $G$ contains both lines $3$ and $4$, hence both lines are indented w.r.t. the first assertion in thread $2$. 
We apply \reflem{lem:og}, and in the discussion below, we describe the local and global correctness checks.

For local correctness in thread $1$, we must establish that
$\{I\}_F \{1, 2\} \{I\}_{\emptyset}$.
By the definition of the future $F$, line~1 must be executed before line 2. 
This means we only need to verify that 
$\{I\}_F \{1\} \{I\}_{F_1}$ and
$\{I\}_{F_1} \{2\} \{I\}_{\emptyset}$, which is identical to a standard Hoare logic proof and relatively uninteresting.

In thread $2$, no order is imposed between lines $3$ and~$4$. 
This means to establish local correctness we must check that:
\begin{itemize}
    \item $\{I\}_G\;\; \texttt{3: r2 := y}\; \{I\}_{G_3}$ 
    \item $\{I\}_G\;\; \texttt{4: x := 1}\; \{I\}_{G_4}$ 
    \item $\{I\}_{G_3}\; \texttt{4: x := 1}\; \{I\}_{\emptyset}$
    \item $\{I\}_{G_4}\; \texttt{3: r2 := y}\; \{I\}_{\emptyset}$
\end{itemize}

The first three are trivial: line~3 modifies neither $x$ nor $r1$ and line 4 does not modify $r1$.
For the final step, the first disjunct of $\{I\}_{G_4}$ is the same as the first disjunct of $\{I\}_\emptyset$, and the second disjunct ensures that we cannot observe $r2 = 1$ after executing line 3.


For global correctness, we must check that every assertion in thread 1 continues to hold after every line of thread 2, and vice versa. These checks are also straightforward, so omit a detailed discussion. The only noteworthy aspect is that for line $3$ (and similarly, line $4$), our precondition is the conjunction $\{I\}_{G_4} \wedge \{I\}_G$, as both $G$ and $G_3$ are subfutures corresponding to line 3.

The proof described above has actually been encoded and checked using our existing Isabelle/HOL development~\cite{DBLP:conf/ecoop/DalvandiDDW19,DBLP:journals/corr/abs-2004-02983} by manually encoding the re-ordering in thread~$2$. We aim to develop full mechanisation support as future work.

\section{Related work}
\label{sec:related}
\paragraph{Work based on assumptions unsound in C++.} 
Most logics for weak memory are based on simplifying assumptions that exclude thin air reads by ensuring program order is respected, even when no semantic dependency exists~\cite{DBLP:conf/ecoop/KaiserDDLV17,DBLP:conf/ecoop/DalvandiDDW19,DBLP:conf/ppopp/DohertyDWD19,DBLP:conf/esop/DokoV17,LahavV15}. These assumptions incorrectly exclude the relaxed outcome of load buffering tests like \reffig{fig:lb-sdep}, introducing unsoundness when applied to languages like C++: compiling \reffig{fig:lb-sdep} for an ARM processor produces code that does exhibit the relaxed behaviour.
The logic of Lundberg et al~\cite{Lundberg} correctly discards many of this spurious program ordering, but it does not handle concurrency.
We provide an operational semantics of C++ that solves the thin air problem, where prior attempts use simplifying assumptions like those of prior logics~\cite{DBLP:conf/oopsla/NienhuisMS16,DBLP:conf/ecoop/DalvandiDDW19}.

\paragraph{Thin-air-free semantics.} 
Each of the concurrency definitions that solves the out-of-thin-air problem is remarkably complex~\cite{DBLP:journals/pacmpl/ChakrabortyV19,DBLP:conf/popl/KangHLVD17,DBLP:conf/pldi/LeeCPCHLV20,DBLP:journals/lmcs/JeffreyR19,ESOP2020-MRD,DBLP:conf/popl/Pichon-Pharabod16}, and so too is MRD. We choose to base our logic on MRD because MRD's semantic dependency relation hides much of the complexity of the model that calculates it. Where previously we would rely on program ordering, now we consider the semantic dependency provided by MRD.

\paragraph{Logics for thin-air-free models.} 
There are four logics built above concurrency models that solve the thin air problem: three of these are very simple and apply to only a handful of examples~\cite{DBLP:journals/pacmpl/JagadeesanJR20,DBLP:journals/lmcs/JeffreyR19,DBLP:conf/popl/KangHLVD17}, and one is a separation logic built above the Promising Semantics~\cite{DBLP:conf/esop/SvendsenPDLV18}. Unfortunately, the Promising Semantics allows unwanted out-of-thin-air behaviour, forbidden by MRD~\cite{ESOP2020-MRD}, and as a result may not support type safety~\cite{DBLP:journals/pacmpl/JagadeesanJR20}. 

\section{Conclusions and future work}
The subtle behaviour of concurrent programs written in optimised languages necessitates good support for reasoning, but existing logics make unsound assumptions that rule out compiler optimisations, or use underlying concurrency models that admit out of thin air behaviour (see Section~\ref{sec:related}). We present a logic built above MRD. MRD has been recognised as a potential solution to the thin air problem in C and C++ by the ISO~\cite{vote} and is the best guess at a C++ model that allows optimisation and forbids thin-air behaviours.

We follow a typical path for constructing a logic and start with an operational semantics that we show equivalent to MRD, and then as far a possible, follow the reasoning style of Owicki Gries. In each case, we diverge from a typical development because we cannot adopt program order into our reasoning system and instead must follow semantic dependency. Our logic supplants linear program order with a partial order, and where traditional pre- and post-conditions are indexed by the program counter, here we index them by their position in the partial order. This approach will work with any memory model that can provide a partially ordered structure over individual program actions.

The challenge in using the logic presented here is in managing the multitude of proof obligations that follow from the branching structure and lack of order in semantic dependency. This problem represents an avenue for further work: it may be possible to obviate the need for some of these additional proof obligations within the logic or to provide tools to more conveniently manage them.



\bibliographystyle{splncs04}
\bibliography{references}

\newpage
\appendix
\section{Operational Semantics of Doherty et al.}

\label{sec:full-op-sem}

This section presents the operational semantics from \cite{DBLP:conf/ppopp/DohertyDWD19}. We formally model parallel composition as a function $T \to \Comm$, where $T$ is the type of a thread identifier. The uninterpreted semantics is given in \reffig{fig:comm-sem} and the interpreted semantics in \reffig{fig:int-comm-sem-sample}. 

\begin{figure}[t]
  \centering \small

  $ \inference{ n = eval(e, \rffun) \qquad a = \levW{i}{x}{n}}{(i\!: [x] := e, \rffun)
    \whilestep{a}_t (\kwskip, \rffun) } $
  \qquad
  $ \inference{n = eval(e, \rffun) \qquad a = \levWR{i}{x}{n}}{(i\!: [x] :=^R e, \rffun)
    \whilestep{a}_t (\kwskip, \rffun) } $
  \medskip 
  
  $ \inference{ a = \levR{i}{x}{n}}{(i\!: r := [x], \rffun)
    \whilestep{a}_t (\kwskip, \rffun[r := n]) } $
  \qquad
  $ \inference{ a = \levRA{i}{x}{n}}{(i\!: r :=^A [x], \rffun)
    \whilestep{a}_t (\kwskip, \rffun[r := n]) } $
  \medskip 

  $ \inference{ n = eval(e, \rffun)}{(i\!: r := e, \rffun)
    \whilestep{a}_t (\kwskip, \rffun[r := n]) } $
    
        \medskip

  $\inference{}{(\kwskip ; C, \rffun) \whilestep{\tau}_t (C, \rffun)}$
  \ \
  $\inference{(C_1, \rffun) \whilestep{a}_t (C_1', \rffun')}{(C_1 ; C_2, \rffun) \whilestep{a}_t
    (C_1' ; C_2, \rffun')}$
  \medskip

  $ \inference
  {eval(B, \rffun) = \True}
  {(\kwif \ B\ \kwthen\
    C_1\ \kwelse\ C_2, \rffun) \whilestep{\tau}_t (C_1, \rffun)} $ 
  \ \
  $ \inference
  {eval(B, \rffun) = \False}
  {(\kwif\ B\ \kwthen\
    C_1\ 
    \kwelse\ C_2, \rffun) \whilestep{\tau}_t (C_2, \rffun)} $

  \medskip
  

  
$\text{\sc Prog}
  \inference{(P(t), \rffun) \whilestep{a}_t (C,\rffun')}
  {(P, \rffun) \whilestep{a}_t (P[t
    := C], \rffun')}$
    
  \caption{Uninterpreted operational semantics of
    commands and programs, where $\whilestep{\tau}$ represents an internal transition}
  \label{fig:comm-sem}
\end{figure}

\begin{figure}[t]
\begin{center}
$
\inference{P \whilestep{\tau}_t P'}{(P, (\sigma, \rffun))
  \ltsArrow{\tau} (P', (\sigma, \rffun))}$ 
\hfill
$\inference{(P, \rffun)
  \whilestep{a}_t (P', \rffun') \qquad b = (g, a, t) \qquad \sigma \strans{\ b\ } \sigma'
  } 
  {(P, (\sigma, \rffun)) \ltsArrow{b}
  (P', (\sigma', \rffun'))}
$
\vspace{-3mm}
\end{center}
  \caption{Interpreted operational semantics of programs}
  \label{fig:int-comm-sem-sample}
\end{figure}

\begin{example}
\label{ex:no-sdep-eg}
Consider the following program: \medskip

\hfill \begin{minipage}[b]{0.6\columnwidth}
\centering
\begin{tabular}[t]{l||@{\ }l}
\multicolumn{2}{c}{\tt Init: x = y = r1 = r2 =  0}\\
\begin{minipage}[t]{0.42\columnwidth}
\tt Thread 1 \\
1: r1 := [x] \\
2: y := 1  
\end{minipage}
&
\begin{minipage}[t]{0.52\columnwidth}
\tt Thread 2 \\
3: r2 := [x] \\
4: x := 1  
\end{minipage}
\end{tabular}\\[2pt]
\blue{\{$r1 \neq 1 \lor r2 \neq 1\}$}
\end{minipage} \hfill {}

\medskip

Consider the first tagged action graph below, which represents a possible state after executing lines the program above for the model in~\cite{DBLP:conf/ppopp/DohertyDWD19}, which is a model where $\ltsb$ and program order coincide. For clarity, we have omitted the tags and thread identifiers of each event. 

  \begin{center}
  \begin{minipage}{0.3\columnwidth}
       \scalebox{0.85}{
      \begin{tikzpicture}[node distance=.5cm]      
        \node (0) at (1.5,3) {$\levW{0}{x}{0},\levW{0}{y}{0}$};
        \node (1) at (0,1.5) {$\levR{1}{x}{0}$};
        \node (2) at (0,0) {$\levW{2}{y}{1}$};
        \node (3) at (3,1.5) {$\levR{3}{y}{1}$};
        \node (4) at (3,0) {$\levW{4}{x}{1}$};
        \path 
        (0) edge[hb] node[left] {$\ltsb$} (1)
        (0) edge[rf,bend right] node[left] {$\ltrf$} (1)
        (0) edge[hb] node[right] {$\ltsb$} (3)
        (2) edge[rf] node[left] {$\ltrf\ \ $} (3)
        (1) edge[hb] node[left] {$\ltsb$} (2)
        (0) edge[mo] node[right] {$\ltmo$} (2)
        (3) edge[hb] node[right] {$\ltsb$} (4)
        (0) edge[mo] node[left] {$\ltmo$} (4)
        ;  
      \end{tikzpicture}
    } 
  \end{minipage}
  \hfill
    \begin{minipage}{0.3\columnwidth}
            \centering 
                
                Derived $\lthb$ and $\ltfr$ \smallskip

       \scalebox{0.85}{
      \begin{tikzpicture}[node distance=.5cm]      
        \node (0) at (1.5,3) {$\levW{0}{x}{0},\levW{0}{y}{0}$};
        \node (1) at (0,1.5) {$\levR{1}{x}{0}$};
        \node (2) at (0,0) {$\levW{2}{y}{1}$};
        \node (3) at (3,1.5) {$\levR{3}{y}{1}$};
        \node (4) at (3,0) {$\levW{4}{x}{1}$};
        \path 
        (0) edge[hb] node[left] {$\lthb$} (1)
        (0) edge[hb] node[right] {$\lthb$} (3)
        (1) edge[hb] node[left] {$\lthb$} (2)
        (3) edge[hb] node[right] {$\lthb$} (4)
        (0) edge[hb] node[left] {$\lthb$} (4)
        (0) edge[hb] node[right] {$\lthb$} (2)
        (1) edge[fr] node[right] {\ \ $\ltfr$} (4)

        ;  
      \end{tikzpicture}
    } 
  \end{minipage}
    \hfill
      \begin{minipage}{0.3\columnwidth}
        \centering 
        Derived $\lteco$ \smallskip
        
      \scalebox{0.85}{
      \begin{tikzpicture}[node distance=.5cm]      
        \node (0) at (0,0) {$\levW{0}{x}{0}$};
        \node (1) at (1,1.25) {$\levR{1}{x}{0}$};
        \node (4) at (2,0) {$\levW{4}{x}{1}$};
        \path 
        (0) edge[rf] node[left] {$\ltrf$} (1)
        (0) edge[mo] node[above] {$\ltmo$} (4)
        (1) edge[fr] node[right] {$\ltfr$} (4)
        ;  
      \end{tikzpicture}
    } \medskip
    
  \scalebox{0.85}{
      \begin{tikzpicture}[node distance=.5cm]      
        \node (0) at (0,0) {$\levW{0}{y}{0}$};
        \node (2) at (2,0) {$\levW{2}{y}{1}$};
        \node (3) at (3,1.25) {$\levR{3}{y}{1}$};
        \path 
        (0) edge[mo] node[above] {$\ltmo$} (2)
        (2) edge[rf] node[right] {$\ltrf$} (3)
        ;  
      \end{tikzpicture}
    } 
      \end{minipage}
  \end{center}
  The derived happens-before and from-read relations are given in the second figure, and the derived extended coherence order in the third. 
\end{example}

\begin{example}
Suppose  $\sigma$ is the state depicted in \refex{ex:no-sdep-eg}. We have $\OW_\sigma(1) = \{\levW{0}{x}{0}, \levW{4}{x}{1}, \levW{2}{y}{1}\}$ (omitting the tag and thread id of the events) since thread $1$ has ``encountered''  $\levW{2}{y}{1}$, but not yet  encountered $\levW{4}{x}{1}$. Similarly, $\OW_\sigma(2) = \{\levW{2}{y}{1}, \levW{4}{x}{1}\}$ since it has encountered both  $\levW{2}{y}{1}$ and $\levW{4}{x}{1}$. 
\end{example}

\section{Equivalence between MRD and the operational semantics}

\subsection{Operational executions are event-structure executions}
\begin{lemma}
\label{opsem-to-mrd}
Given some program $P$ such that 
$$(P, (\emptyset, \lambda x.\; (\lambda y. 0)), F) \fullrel{}^* (skip, (\sigma, \Gamma)), F')$$
we name the components of $\sigma$ to be $(D, sb, \ltrf_D, \ltmo)$.
There must exist an execution $X = (E, \dep, \ltrf_E, \ppo)$ in $\llbracket P \rrbracket_{n,\; (\lambda x.\; 0)\; \emptyset}$ and a bijection over events $\bij: E \rightarrow D$ such that $\bij(E) = D$ and $\bij(\ltrf_E) = \ltrf_D$ where $\bij(e) = (d, a, t) \implies \lambda(e) = a$.
\end{lemma}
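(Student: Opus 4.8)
The plan is to prove \reflem{opsem-to-mrd} by induction on the length of the operational execution trace, strengthening the induction hypothesis so that at each intermediate configuration we maintain a candidate partial execution of $\sem{P}$ together with the event-to-tagged-action bijection. First I would set up the invariant: after $k$ steps we are in a configuration $(\pset{Q}, (\sigma_k, \Gamma_k), F_k)$ with $\sigma_k = (D_k, \ltsb_k, \ltrf_{D_k}, \ltmo_k)$, and there exists a partial execution $X_k = (E_k, \dep_k, \ltrf_{E_k}, \ppo_k)$ of $\sem{P}$ and a bijection $\bij_k : E_k \to D_k$ such that (i) $\bij_k$ respects labels, i.e. $\bij_k(e) = (d,a,t) \imp \labfun(e) = a$; (ii) $\bij_k(\ltrf_{E_k}) = \ltrf_{D_k}$; (iii) $D_k$ is exactly the set of tagged actions consumed so far, and $E_k$ is the up-closed (under $\preceq = \dep \cup \ppo$) set of $\MRD$ events whose labels were chosen; and (iv) $F_k$ is obtained from the initial future set $F_\mu$ by the successive applications of $a \future (\cdot)$, so $F_k = a_k \future (\cdots \future (a_1 \future F_\mu))$, and in particular $F_k$ is non-empty throughout.

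The base case is immediate: $D_0$ is the set of initialisation writes, $E_0$ the corresponding $\MRD$ initialisation events, $X_0$ the empty partial execution extended with those events, and $\bij_0$ the obvious identification. For the inductive step, suppose $(\pset{Q}, (\sigma_k, \Gamma_k), F_k) \ltsArrow{\mu} (\pset{Q}', (\sigma_{k+1}, \Gamma_{k+1}), F_{k+1})$ via {\sc Future-Step} consuming action $a$ with tag $g$ in thread $t$. Since $a \future F_k \neq \emptyset$, there is a future $f \in F_k$ in which some event $e$ with $\labfun(e) = a$ is available, i.e. $\preceq$-minimal. I would then pick this $e$ and set $E_{k+1} = E_k \cup \{e\}$, extend $\bij$ by $e \mapsto (g,a,t)$ (well-defined since $g \notin tags(D_k)$ by the premise of the rule, and $e \notin E_k$ since its label was not yet consumed along that future — here I need the correspondence between $F_k$ and the un-consumed part of $\dep \cup \ppo$). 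The dependency and preserved-program-order edges of $X_{k+1}$ are inherited: $\dep_{k+1} = \dep \cap (E_{k+1} \times E_{k+1})$ and $\ppo_{k+1}$ likewise, where $\dep, \ppo$ come from the ambient execution of $\sem{P}$ fixed at the start. For $\ltrf$: if $a$ is a read, the {\sc Read} rule adds $(w, b)$ with $w \in \OW_{\sigma_k}(t)$; I must exhibit an $\MRD$ write event $e_w = \bij_k^{-1}(w)$ and argue $(e_w, e) \in \ltrf_{E_{k+1}}$ is consistent with the $\MRD$ coherence axioms — this is where I invoke the axioms of $\MRD$ (alluded to in the footnote in \refsec{sec:mrd}) relating observable-writes in the operational state to coherence-legal reads-from in the event structure. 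If $a$ is a write, $\ltrf$ is unchanged and $\ltmo$ is updated, which does not affect the execution tuple $X$ (which records $\dep$, $\ltrf$, $\ppo$ but not $\ltmo$) except insofar as we must later check completeness/consistency.

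The main obstacle I expect is the read case: showing that the operational choice of a write from $\OW_{\sigma_k}(t)$ always corresponds to a reads-from edge that keeps $X_{k+1}$ a legal partial execution of $\sem{P}$, and conversely (for the full soundness-and-completeness theorem) that every legal $\MRD$ reads-from choice is reachable as an observable write. This requires carefully relating the operational notions $\observedWrites$, $\OW$, $\lteco$, $\lthb$ to the $\MRD$ coherence axioms, and is essentially the heart of the equivalence. A secondary subtlety is the bookkeeping that $F_k$ faithfully tracks the $\preceq$-downward-consumed portion of the fixed execution — one must show availability of $e$ in some $f \in F_k$ is equivalent to $e$ being $\preceq_{|E \setminus E_k}$-minimal, which follows from the definition of $a \future f$ and an auxiliary lemma that $F_k$ is (up to the label-vs-event distinction) the set of suffixes of linear extensions of $\preceq$ compatible with the choices made. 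Finally, once the induction reaches a configuration where $\pset{Q}' = \emptyset$ and $F' = \emptyset$, the accumulated $X = X_n$ is a complete execution (every event consumed, every read assigned a reads-from), $\bij = \bij_n$ is the required bijection, and $\bij(\ltrf_E) = \ltrf_D$ and the label-compatibility condition hold by construction, which is exactly the statement of the lemma.
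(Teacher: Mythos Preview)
Your approach is sound in outline but takes a genuinely different route from the paper's. You construct the witnessing \MRD execution directly, by induction on the length of the operational trace, maintaining at each step a partial execution $X_k$ together with a bijection $\bij_k$. The paper instead argues by contrapositive in a few lines: suppose no \MRD execution matches $(D,\ltrf_D)$; since the initial future set $F_\mu$ is derived from the set of \MRD executions and {\sc Future-Step} requires $a \future F \neq \emptyset$, the set $D$ of consumed actions must coincide with $\bij(E)$ for some execution; then for $\ltrf$, the paper appeals to the fact that \MRD \emph{exhaustively} generates every $\ltrf$ edge that keeps $\ltrf \cup \dep \cup \ppo$ acyclic, and observes that the future discipline (only $\preceq$-minimal events are available) prevents the operational semantics from ever introducing a cycle-closing $\ltrf$ edge. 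Your construction is more explicit and arguably more informative as a proof object; the paper's argument is shorter and leans entirely on the exhaustive-generation property of \MRD rather than on a step-by-step simulation.

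Two remarks on your plan. First, you overcomplicate the read case: you flag as the ``main obstacle'' relating $\OW_{\sigma_k}(t)$ to \MRD's coherence axioms, but for \emph{this} direction that is unnecessary. All you need is that the new $\ltrf$ edge does not close a cycle in $\ltrf \cup \dep \cup \ppo$ (which follows because the read event was $\preceq$-minimal in some future and the write was already consumed), and then exhaustive generation guarantees an \MRD execution containing that edge. The observable-write/coherence correspondence is the workhorse of the \emph{converse} direction (\reflem{mrd-to-opsem}), not of this one. Second, your phrase ``the ambient execution of $\sem{P}$ fixed at the start'' is misleading and, taken literally, would break the induction: distinct futures in $F_\mu$ arise from distinct \MRD executions, so you cannot commit to one execution up front. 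Your invariant as stated in (i)--(iv) is in fact existential, which is correct; just make sure the inductive step re-selects a compatible execution after each transition rather than clinging to the one chosen at step $k$.
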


Suppose the contrapositive: that for any execution $X$ in the MRD structure, either $D \not = \bij(E)$ or $\ltrf_D \not = \bij(\ltrf_E)$. 

If there are no executions containing exactly the actions in $D$, then we cannot arrive at event set $D$ by application of $\fullrel{a}$ - it requires $a \future f$ for at least one future, and the initial future set is derived directly from the set of executions. There must therefore exist at least one execution such that $D = \bij(E)$. Considering only these executions, we now show that at least one must have equivalent $\ltrf$ edges to $\ltrf_D$.

We know that MRD exhaustively generates $\ltrf$ edges which do not create a cycle in $\ltrf \cup \dep{} \cup \ppo$.
There cannot be an edge in all $\ltrf_E$ which is simply absent from $\ltrf_D$ with no new edge added, as this would imply that $\ltrf_D$ is incomplete. This is forbidden by the read execution rule.
To arrive at a complete $\ltrf_D$ relation which contains an edge not present in any potential $\ltrf_E$ where $E = D$, the proposed edge must cause such a cycle. Given that futures are ordered by $\dep \cup \ppo$, the new cyclical edge must connect a maximal event to a minimal one. Once again, the requirement that $e \future f$ for some $f$ prevents executing a maximal event before a minimal one.
Thus, there must be some $\ltrf_E$ which is equal to $\ltrf_D$.

\subsection{The operational semantics covers all event-structure executions.}
\newcommand{\rel}[1]{\xrightarrow{#1}}
\newcommand{\rf}{\xrightarrow{rf}} 
\begin{lemma}
\label{mrd-to-opsem}
Let $X = (E, \dep, rf, \ppo)$ be a complete execution of the MRD model $\llbracket P \rrbracket_{n\; \lambda x.\; 0,\; \emptyset}$ for large enough $n$ to ensure termination. Let the relation $\rel{R}$ be $\dep \cup rf \cup \ppo$, observing that $E$ is partially ordered by $\rel{R}^*$. 
There exists a relation $\rel{R'}$ disjoint from $\rel{R}$ such that $\rel{R \cup R'}$ is total and acyclic over $E$, giving the chain $e_1 \rel{R \cup R'} e_2 ... e_k$, and an execution of the operational semantics $\sigma_1 \strans{w, e_1} \sigma_2 ... \sigma_k$ such that $e_1 .. e_k \sim \sigma_k$, where $e_1 .. e_n$ denotes $X$ with the event set and relations restricted to $\{e_1 ... e_n\}$ and $(E, rf_E, dp, ppo) \sim (D, rf_D, mo)$ iff $\lambda(E) = D \wedge \lambda(rf_E) = rf_D$.
\end{lemma}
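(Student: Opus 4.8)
The plan is to linearise the complete, consistent execution $X$ in a coherence‑respecting way and replay that linearisation one action at a time, maintaining an invariant that ties the operational state to the corresponding prefix of $X$.

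First I would fix the order. Since $X$ is complete and consistent, the consistency axioms of $\MRD$ yield a per‑location total coherence order $co$ on the writes of $E$ witnessing coherence for $rf$; put $fr = (rf^{-1};co)\setminus{\it Id}$ and $eco = (rf\cup co\cup fr)^+$. Consistency of $X$ — the $\MRD$ coherence conditions together with its dependency‑based no‑thin‑air guarantee — ensures that $\rel{R}\cup co\cup fr$ is acyclic, so I take $\rel{R\cup R'}$ to be a linear extension of $(\rel{R}\cup co\cup fr)^+$ and let $\rel{R'}$ be that linear order minus $\rel{R}$: this is disjoint from $\rel{R}$ by construction and, being a strict total order, is both total and acyclic. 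The resulting chain $e_1\rel{R\cup R'}\cdots\rel{R\cup R'}e_k$ simultaneously extends $\dep$, $\ppo$, $rf$, $co$ and $fr$.

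Next I would prove, by induction on $j\in\{0,\dots,k\}$, that there is a run of the operational semantics whose memory transitions are $\sigma_1\strans{}\cdots\strans{}\sigma_{j+1}$ and which maintains: (a) $e_1\cdots e_j\sim\sigma_{j+1}$ under a bijection $\bij$ identifying each processed event with the tagged action it produced, which I use below to conflate the two; (b) $\ltmo_{\sigma_{j+1}}$ equals $co$ restricted to $e_1,\dots,e_j$; (c) $\ltsb_{\sigma_{j+1}}$ equals the chain order restricted to same‑thread (or initialiser) pairs among those events; and (d) the future set reached is $\labfun(e_j)\future\cdots\future\labfun(e_1)\future F$, which is non‑empty because the future extracted from $X$ itself lies in $F$ and survives all $j$ consumptions — each $\labfun(e_m)$ being available when its turn comes, since all its $\dep\cup\ppo$‑predecessors precede it in the chain and were consumed earlier (labels are unique within a single execution, so nothing is over‑consumed). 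For the step $j\mapsto j+1$: $e_{j+1}$ is $\preceq$‑minimal in the unconsumed remainder of that future, so its action $a=\labfun(e_{j+1})$ is available, and since its label is still fresh the generating command $i\!:s$ remains in $\pset{Q}(t)$, so the thread‑local rule of \reffig{fig:comm-sem-sample} fires. If $e_{j+1}$ is a write I apply {\sc Write}, inserting the new action directly after the $co$‑maximal already‑processed write to its location, which preserves (b); if $e_{j+1}$ is a read with $(w,e_{j+1})\in rf$ then $w$ was processed earlier (the chain extends $rf$), with $\loc(w)=x$ and $\wrval(w)=n$ matching, so the only remaining premise of {\sc Read} is $w\in\OW_{\sigma_{j+1}}(t)$.

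Finally, the main obstacle is this observability check. Suppose instead that $w'\in\observedWrites_{\sigma_{j+1}}(t)$ with $(w,w')\in\ltmo_{\sigma_{j+1}}$; then $(w,w')\in co$ and there is a $t$‑action $b$ already processed — so $b=e_m$ with $m\le j$ — such that $(w',b)\in\lteco_{\sigma_{j+1}}^?;\lthb_{\sigma_{j+1}}^?$. By (b) and (c), $\lteco_{\sigma_{j+1}}\subseteq eco$ and $\lthb_{\sigma_{j+1}}\subseteq(\ltsb_{\sigma_{j+1}}\cup rf)^+$, both contained in the chain order; combined with $(e_{j+1},w')\in fr$ (from $rf(w,e_{j+1})$ and $(w,w')\in co$), it follows that $e_{j+1}$ precedes $b=e_m$ in the chain order, forcing $j+1<m$ and contradicting $m\le j$. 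Hence $w\in\OW_{\sigma_{j+1}}(t)$, {\sc Read} fires, invariants (a)–(d) are re‑established, and at $j=k$ we obtain $e_1\cdots e_k=X\sim\sigma_k$. I expect the acyclicity claim of the first step — that adjoining $co$ and $fr$ to $\dep\cup rf\cup\ppo$ closes no cycle — to be the genuine difficulty, as it is precisely where $\MRD$'s coherence and no‑thin‑air conditions have to be combined; the remaining steps are bookkeeping across the two transition relations.
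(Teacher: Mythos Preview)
Your argument is sound and takes a genuinely different route from the paper. The paper linearises only $R=\dep\cup rf\cup\ppo$, builds $\ltmo$ operationally as the chain order on same-location writes, and when the observability premise of {\sc Read} fails at step $n$ it does not argue directly that $w\in\OW$; instead it notes that any offending $w''$ with $(w,w'')\in\ltmo$ is either $R$-after $w$ (ruled out by the $\MRD$ axioms given $(w,e_n)\in rf$) or only placed before $e_n$ by the arbitrary component $R'$, and in the latter case invokes a separate swapping lemma (Lemma~\ref{new_arb_rel}) to retroactively produce a different $R'$ in which the obstacle disappears. You instead front-load all of this: you fix a coherence witness $co$, linearise $R\cup co\cup fr$ once, and maintain the invariant $\ltmo_{\sigma_{j+1}}=co|_{\{e_1,\dots,e_j\}}$, so the $fr$-edge from $e_{j+1}$ to any would-be obstruction already lives in the chain and the contradiction is immediate, with no repair step.

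What each buys: your induction is cleaner and the invariant explicit, but the whole weight now rests on extracting a $co$ from the $\MRD$ consistency witness and establishing acyclicity of $\dep\cup rf\cup\ppo\cup co\cup fr$. You rightly flag this as the crux; just be aware that the paper does not state such a global acyclicity condition, and its swapping detour is there precisely because the $\MRD$ axioms are phrased per-edge rather than as one acyclicity claim, so this step is a real obligation and not mere bookkeeping. Conversely, the paper's approach avoids committing to any $co$ in advance but pays with a more delicate argument about modifying the chain mid-induction. One small omission on your side: for the {\sc Write} step you should also note that the $co$-maximal already-processed write is $\ltmo$-maximal and hence trivially in $\OW_\sigma(t)$, though this is immediate.
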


\paragraph{Proof sketch}
The existence of an acyclic $\rel{R \cup R'}$ is given by the acyclicity of $\rel{R}$ asserted by the MRD axiomatic check.
We proceed by induction over $\rel{R \cup R'}$ and $\strans{w, e}$.
The base case is trivial:
take any arbitrary $\rel{R'}$ which makes $\rel{R \cup R'}$ total. Because $e_1$ must be minimal in $\rel{R}$, it must be minimal in $\rf$ and minimal in $\dep$. Being minimal in $\rf$ means it cannot be a read or RMW event. Being minimal in $\dep$ means it must be in an available future. Let $\sigma_1$ be $(\{\lambda(e_1)\}, \emptyset, \emptyset)$.
It remains to show that $e_1 .. e_{n-1} \sim \sigma_{n-1} \implies e_1 .. e_n \sim \sigma_n$ for $n \leq k$.
There are 3 cases for the event type $e_n$: a read event, a write event, or an RMW event.

In the case where $e_n$ is a read, to use the semantic rule for creating state $\sigma_{n}$ requires an event $\lambda(w) \in OW_{\sigma_{n-1}}(tid(e_n))$ such that $var(w) = var(e_n) \wedge wrval(w) = wrval(e_n)$. This adds $(\lambda(w), \lambda(e_n))$ to $rf_D$.

Given $X$ is complete, there must be some write $w'$ such that $w' \rf e_n$. It must also be true that $w' \rel{R \cup R'} e_n$, due to the inclusion of $rf$ in $\rel{R}$. Therefore, we can use $w'$ as $w$.

The event $w$ will be in $OW_{\sigma_{n-1}}(tid(w))$ if it is maximal in $mo$. 
Suppose some event $w''$ exists such that $(w, w'') \in mo$. By the construction of $mo$, it must also be true that $w \rel{R \cup R'}^* w''$. If $w \rel{R}^* w''$, then the axiomatic checks in MRD prevent the $w \rf e_n$ edge from being added. Otherwise, by lemma \ref{new_arb_rel}, there exists another choice of $\rel{R'}$ such that $w'$ is added to $\sigma_n$ before $w$ and $(w, w')$ is no longer maximal in $mo$.

In the case where $e_n$ is an initialising write, we only need $D$ such that $e_n \in D$.
If $e_n$ is a non-initialising write, meaning there is some event $w \in X$ which is a write to the same variable as $e_n$, we must show that $w$ is in $OW(e_n)$. This follows by the same reasoning as in the read case.

\begin{lemma} \label{new_arb_rel}
If $e_i \rel{R \cup R'}^* e_j \wedge e_i \not \rel{R}^* e_j$ in the chain $e_1 .. e_n$ then there exists a valid choice of relation $\rel{R''}$ such that $e_j \rel{R \cup R''}^* e_i$, the union $\rel{R \cup R''}$ is acyclic, and
there exists an event $e_n'$ in $e_1 .. e_n$ such that
\begin{enumerate}
    \item $e_1 .. e_n$ under $\rel{R \cup R'}$ is equal to $e_1 .. e_n'$ under $\rel{R \cup R''}$
    \item $\sigma_n'$, the result of running the operational semantics under $\rel{R \cup R''}$, is equal to $\sigma_n$ excepting relation edges between $e_i$ and $e_j$
\end{enumerate}
\end{lemma}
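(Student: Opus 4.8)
The plan is to treat this as a ``commuting reordering'' fact for linearisations of a partial order, and then to push the reordering through the operational semantics. Write the given chain as $e_1 \rel{R \cup R'} \cdots e_n$ and let $p < q$ be the positions of $e_i$ and $e_j$ (so $p<q$ because $e_i \rel{R \cup R'}^* e_j$); from $e_i \not\rel{R}^* e_j$ we get $(e_i,e_j)\notin\rel{R}$ and, more usefully, that no event $\rel{R}^*$-below $e_j$ is $\rel{R}^*$-above $e_i$. I would let $D$ be the events in positions $p{+}1,\dots,q$ that are $\rel{R}^*$-below $e_j$, so $e_j\in D$, $e_i\notin D$, and $D$ is $\rel{R}$-down-closed within that window, and then define the new total order by keeping the chain but sliding the block $D$, in its original relative order, to just before $e_i$:
\[
  e_1,\dots,e_{p-1},\ \underbrace{D}_{\text{in order}},\ e_i,\ \underbrace{\text{rest of } e_{p+1},\dots,e_n}_{\text{in order}}.
\]
A short case analysis on an arbitrary edge $a\rel{R} b$, splitting on which of the four regions $a$ and $b$ lie in, shows this order still extends $\rel{R}$ using the closure properties of $D$; so I would take $\rel{R''}$ to be this order with its $\rel{R}$-pairs deleted, which is disjoint from $\rel{R}$, makes $\rel{R\cup R''}$ total and acyclic, and has $e_j\rel{R\cup R''}^* e_i$. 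When $e_i,e_j$ are adjacent, $D=\{e_j\}$ and this is just their transposition. The new chain permutes exactly $\{e_1,\dots,e_n\}$, so I would take $e_n'$ to be its last event; then the prefixes $e_1 .. e_n$ (under $\rel{R\cup R'}$) and $e_1 .. e_n'$ (under $\rel{R\cup R''}$) range over the same event set, and since $e_1 .. e_k$ denotes $X$ restricted to that set of events, which is independent of the order, part~1 is immediate.

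For part~2 I would run the operational semantics along both chains and compare. Up to position $p{-}1$ the chains agree, so the configurations are literally identical; on the suffixes the two runs consume the same events with the same fresh tags, and I would induct along the remaining steps maintaining that the two configurations share (i) the tagged-action set; (ii) $\ltrf$ — which in the MRD-to-operational construction is pinned to $\bij$ applied to $X$'s reads-from and is therefore order-independent; and (iii) $\ltmo$, provided each write is inserted immediately after the $\ltmo$-maximal already-present write that $mo$-precedes it in $X$, which pins $\ltmo$ to $\bij$ applied to $X$'s modification order, again order-independent. The only permitted discrepancy is then the reversed order-edge(s) between $e_i$ and $e_j$: an $\ltsb$-edge if they share a thread, and the $\ltmo$-edge if they are same-variable writes.

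The genuinely delicate point — and the main obstacle — is the order-sensitive side condition of the \textsc{Read} rule: a read may be added only while its $X$-prescribed write lies in $\OW$, which depends on which writes are \emph{encountered} so far. I would argue that moving $e_i$ later cannot make that write unobservable at the corresponding step of the $\rel{R\cup R''}$-run, since any $\ltmo$-later encountered write blocking it would have had to become encountered earlier, and (via the MRD coherence axioms that forbid $\ppo;\lteco$-style cycles) that would already contradict $e_i\not\rel{R}^* e_j$; and that the \emph{final} encountered/observable sets coincide because $\lthb$ and $\lteco$ restricted to the common event set are ultimately fixed by $X$. Because of this, the statement is really proved by a joint induction with \reflem{mrd-to-opsem} — ``the $\rel{R\cup R''}$-run does not get stuck'' — rather than in isolation, and the clean way to organise it is to strengthen the invariant so that it records exactly which encountered-/observable-write sets may differ between the two runs and confines those differences to the $\ltsb$/$\ltmo$-neighbourhood of $e_i$. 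A secondary subtlety to make precise is the exact scope of ``excepting relation edges between $e_i$ and $e_j$'': with the block move, the edges that actually change are those between $e_i$ and the events of $D$, so one should either restrict the statement to the adjacent case (reaching the general case by iterating transpositions of $\rel{R}$-incomparable adjacent events) or widen the exception clause accordingly.
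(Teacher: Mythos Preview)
Your approach is sound and in several places more explicit than the paper's own argument; the main difference is in how the new linearisation is built. You construct it \emph{locally}: identify the $\rel{R}^*$-down-closed block $D$ of predecessors of $e_j$ inside the window $[p{+}1,q]$ and slide just that block before $e_i$, leaving everything outside the window fixed. The paper instead builds $\rel{R''}$ \emph{globally}: it first forces the single edge $e_j \rel{R''} e_i$, argues directly that this cannot create a cycle with $\rel{R}$ (any putative $e_i\to e_j$ path would yield $e_i\rel{R}^* e_j$), picks some $\rel{R}$-maximal event $\neq e_j$ as the new last element $e_{n'}$, and then partitions all remaining events into those $\rel{R}^*$-before $e_j$ (placed first) and those not (placed after), linearising each block arbitrarily subject to $\rel{R}$. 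Your construction moves strictly fewer events and therefore gives tighter control over which $\ltsb$/$\ltmo$ edges can differ between the two runs; the paper's coarser repartitioning is quicker to state but makes its later claim that ``only edges of shape $(e,e_j)$ or $(e_i,e)$ can flip'' harder to see from the construction itself.

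On part~2, your instinct that the observable-writes side condition of the \textsc{Read} rule is the real obstacle, and that the lemma is effectively proved jointly with the surrounding completeness lemma, matches how the paper actually uses it. The paper's proof of the present lemma dispatches $\ltrf$ in one line (``rf edges are constrained by $\rel{R}$'') and argues only about $\ltmo$ edges adjacent to $e_i$ or $e_j$; the question ``is the prescribed write still in $\OW$ after reordering?'' is deferred to the calling lemma, where this lemma is invoked precisely to push an offending $\ltmo$-later write past the read. Your caveat that the exception clause ``edges between $e_i$ and $e_j$'' is too narrow under a block move is well taken; the paper's argument has the same looseness, and your suggested fix --- reduce to adjacent transpositions of $\rel{R}$-incomparable events and iterate --- is the cleanest way to make either version precise.
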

\begin{proof}

Begin by enforcing that $e_j \rel{R''} e_i$. A cycle in $\rel{R \cup R''}$ 
would have the shape $e_i \rel{R \cup R''} e_j$. As we can trivially prevent $e_i \rel{R''}^* e_j$ by construction, this relation edge must contain one or more event pairs $e$ and $e'$ such that $e \rel{R}^* e'$. This event pair can be moved $\rel{R''}$-before $e_j$ unless $e_j \rel{R}^* e$, and it can be moved $\rel{R''}$-after $e_i$ unless $e' \rel{R}^* e_i$. If both of these are true, then $e_i \rel{R}^* e_j$, which we know is untrue.

To find $e_n'$, choose any element which is not equal to $e_j$ from the set of events which are maximal in $\rel{R}$ and make it maximal in $\rel{R''}$. There must be some element in this set which is not $e_j$, as otherwise we would have $\forall e. e \rel{R} e_j$ and it would again be true that $e_i \rel{R}^* e_j$.

What remains is to create two arbitrary orderings $R''_1$ and $R''_2$ such that $e_1 \rel{R \cup R''_1} e_j$ and $e_j \rel {R \cup R''_2} e_n$ and both $\rel{R \cup R''_1}$ and $\rel{R \cup R''_2}$ are acyclic. Partition all remaining events into those which are and are not $\rel{R}$-before $e_j$, then construct $\rel{R''_1}$ over the first and $\rel{R''_2}$ over the second. These are individually trivial by the acyclicity of $\rel{R}$, and their union must be acyclic as their event sets are disjoint.

It is trivial to establish that $e_1..e_n = e_1..e_{n'}$ by the fact that all relation edges in $X$ are preserved in $\rel{R}$. 
To show that $\sigma_n$ is equivalent to $\sigma_{n'}$ excepting $(e_i, e_j)$ edges, it suffices to show that if $(e, e') \in mo$ then $(e, e') \in mo'$, as rf edges are constrained by $\rel{R}$.
Events are only relocated around points $e_i$ and $e_j$, so any such edges would have the shape $(e, e_j)$ or $(e_i, e)$ in $\sigma_n$ and become $(e_j, e)$ or $(e, e_i)$ respectively.
Focusing on the first case, any such reordering is only \emph{required} if $e_j \rel{R}^* e$, as this forces $e$ to be placed $\rel{R \cup R''}^*$-after $e_j$ in our construction of $\rel{R''}$. This makes the initial mo edge impossible to observe. 
The same reasoning holds for the $(e_i, e)$ edge.

\end{proof}

\newpage

\section{Further Examples}

\subsection{Partial conditional}

Our first example is the partial conditional example, whose proof outline is given in \reffig{fig:par-cond}. The program contains no dependencies in Thread 1 and non-trivial dependencies in thread 2. In particular, in Thread 2 the dependencies are condition on the values read for $x$ and $y$. 
\begin{itemize}
    \item If $r1 = r2 = 1$, then we have dependencies between lines 3 and 5, and between lines 4 and 6, but lines 3/4 and 5/6 are mutually independent. 
    \item If $r1 = 1$ and $r2 = 0$, there is a dependency between lines 3 and 7. 
    \item If $r1 = 0$ and $r2 = 1$, there is a dependency between lines 4 and 8. 
\end{itemize}
This is reflected by the futures $F$ for thread 1 and $G$ for thread 2 in \reffig{fig:par-cond}.

\begin{figure}[h]
    \centering
\begin{tabular}[t]{r||@{\ \ }l}
\multicolumn{2}{l}{{\tt Init: x = y = r1 = r2 = 0}}\\
\multicolumn{1}{l}{\tt Thread 1} & 
\multicolumn{1}{l}{\tt Thread 2} \\
\begin{minipage}[t]{0.36\columnwidth}\tt
 \vspace{-0.75em}
 \blue{$\{[x = 0]_2 \wedge [y = 0]_2\}_F$} \\
 \phantom{xx}\blue{\textrm{$\{[x = 0]_2 \wedge [y \in \{0,1\}]_2\}_{F_2}$}} \\
\phantom{xx}1: [x] := 1 \\
  \phantom{xx}\blue{\textrm{$\{[x \in \{0,1\}]_2 \wedge [y =0]_2\}_{F_1}$}} \\
\phantom{xx}2: [y] := 1  \\
\blue{$\{[x \in \{0,1\}]_2 \wedge [y \in \{0,1\}]_2\}_\emptyset$}
\end{minipage}
&
\begin{minipage}[t]{0.64\columnwidth} \tt 
 \vspace{-0.75em}

 \blue{\textrm{$\{I\}_G$}} \\
  \phantom{xx}\blue{\textrm{$\{r2 = m \wedge I\}_{G_{4_n}}$}}\\
 \phantom{xx}3: \ r1 := [x] ;\\
  \phantom{xx}\blue{\textrm{$\{r1 = m \wedge I\}_{G_{3_m}}$}}\\
 \phantom{xx}4: \ r2 := [y]  ; \\
 \phantom{xx}\blue{\textrm{$\{r1 = m 
 \wedge r2 =n \wedge I\}_{G_{3_m,4_n}}$}}\\
\phantom{xx}\phantom{4:}  \ {\bf if}\ r1 = 1 \& r2 = 1  \{ \\
 \phantom{xxxxx} \blue{\textrm{$\{r1 = 1 \wedge r2 = 1\wedge I\}_{G_{3_1,4_1}}$}}
 \\
  \phantom{xxxxxxxx} \blue{\textrm{$\{r1 = 1 \wedge r2 = 1\wedge [w=0]_2 \wedge [z=1]_2\}_{G_{3_1,4_1,6}}$}}\\
\phantom{xxxxxxxx} 5: \ [w] := 1  ; \\
 \phantom{xxxxxxxx} \blue{\textrm{$\{r1 = 1 \wedge r2 = 1\wedge [w=1]_2 \wedge [z=0]_2\}_{G_{3_1,4_1,5}}$}}\\
 \phantom{xxxxxxxx} 6: [z] := 1  ; \} \\
 \phantom{xx}\phantom{4:} \ {\bf if}\ r1 = 0 \& r2 = 1 \{  \\
 \phantom{xxxxx} \blue{\textrm{$\{r1 = 0 \wedge r2 = 1\wedge I\}_{G_{3_0,4_1}}$}}\\
 \phantom{xxxxxxxx} 7:  z := 1  \} \\
\phantom{xx}\phantom{4:} \  {\bf if}\ r1 = 1 \& r2 = 0 \\
 \phantom{xxxxx} \blue{\textrm{$\{r1 = 1 \wedge r2 = 0\wedge I\}_{G_{3_1,4_0}}$}}\\
\phantom{xxxxxxxx}8: w := 1  ;  \\
 \blue{\{$[w = r1]_2 \wedge [z = r2]_2\}_\emptyset$}
\end{minipage}
\\
\multicolumn{2}{c}{\blue{\{$[w = r1]_2 \wedge [z = r2]_2\}$\qquad \qquad \qquad \qquad }}\\
\end{tabular}
\caption{Partial conditional: $F = \{\{1, 2\}\}$ $G = \{
\{3_1 \prec 5, 4_1 \prec 6\}, \{3_1 \prec 8, 4_0\}, \{4_1 \prec 7, 3_0\}, \{3_0, 4_0\}\}$ and $I= [w = 0]_2 \wedge [z = 0]_2$}
    \label{fig:par-cond}
\end{figure}



    
 


\subsection{Random number generator (RNG)}

Our next example is the random number generator (RNG) litmus test (see \reffig{fig:rng}) from \cite{DBLP:journals/pacmpl/ChakrabortyV19}, which in a poorly defined semantics has the potential to terminate so that line~7 is executed. Our proof outline shows that this cannot be the case in our semantics, i.e., the program satisfies the postcondition $[x \not\approx 99]_{1,2,3}$.

\begin{figure}[h]

\centering

\noindent
\begin{tabular}[t]{l||@{\ \ }l@{\ \ }||@{\ \ }l}
\multicolumn{3}{l}{{\tt Init:} $x = y = r1 = r2 = r3 = 0$}\\
\multicolumn{1}{l}{\tt Thread 1} & 
\multicolumn{1}{l}{\tt Thread 2} & 
\multicolumn{1}{l}{\tt Thread 3} \\
\begin{minipage}[t]{0.29\columnwidth}\tt
 \vspace{-0.75em}
 \blue{\textrm{$\{[y = 0]_2 \wedge r2  = 0 $}}
\\ 
\blue{\textrm{${}\wedge \forall i.\ i \neq 0 \imp  [x \not\approx i]_1\}_F$}} \\
1: r1 := x \\
 \blue{\textrm{$\{[y = 0]_3 \wedge r1 = 0\}_{F_1}$}}\\
2: y := r1 + 1  \\
\blue{$\{true\}_\emptyset$} 
\end{minipage}
&
\begin{minipage}[t]{0.33\columnwidth}\tt
 \vspace{-0.75em}
 \blue{\textrm{$\{[x = 0]_{1,2,3}$}} \\
\ \ \ \ \blue{\textrm{${} \wedge \forall i.\ i \notin \{0,1\} \imp [y \not\approx i]_2\}_{G}$}} \\
3: r2 := y \\
 \blue{\textrm{$\{[x = 0]_{1,2,3} \wedge r2 \in \{0,1\}\}_{G_3}$}}\\
4: x := r2  \\
\blue{$\{true\}_\emptyset$}
\end{minipage}
&
\begin{minipage}[t]{0.45\columnwidth} \tt 
 \vspace{-0.75em}
\blue{\{$\forall i \notin \{0,1\} \imp [x \not \approx i]_3$} \\
\blue{${} \wedge [x \not\approx 99]_{1,2,3}\}_H$} \\
 5: r3 := x \\
 \blue{\textrm{$\{r3 \in \{0,1\} \wedge [x \not\approx 99]_{1,2,3}\}_{H_5}$ }} \\
\phantom{5:} \kwif\, r3 = 100 \kwthen \\
\blue{\ \ \ \ \ \ $\{\False\}_{H_5}$} \\
6: \ \ \ x := 99\\
\blue{$\{[x \not\approx 99]_{1,2,3}\}_\emptyset$}
\end{minipage}
\end{tabular}\\[2pt]
\blue{\{$[x \not\approx 99]_{1,2,3}$\}} 
\caption{RNG, where $F=\{\{1 \prec 2\}\}$, $G=\{\{3 \prec 4\}\}$, $H=\{\{5 \prec 6\}\}$}
\label{fig:rng}
\end{figure}

The program is similar to \reffig{fig:lb-sdep-po}, but contains semantic dependencies between lines 1 and 2, between lines 3 and 4, and between lines 5 and 7. This means that no reorderings are permitted within any of the threads, thus its proof is straightforward. In fact, we have used our existing Isabelle/HOL formalisation~\cite{DBLP:journals/corr/abs-2004-02983} for a model that assumes program order to discharge the proof obligations.

\begin{theorem}
The proof outline in \reffig{fig:rng} is valid. 
\end{theorem}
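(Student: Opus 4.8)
The plan is to instantiate the Owicki--Gries decomposition of \reflem{lem:og} with the thread-local future predicates read off from \reffig{fig:rng}, exploiting the fact that for this program $\MRD$ imposes the semantic dependencies $1\to2$, $3\to4$ and $5\to6$ and no others, so that the $\MRD$ order coincides with program order in every thread. The first task is therefore to justify the displayed initial futures: I would argue that $\sem{P}$ yields a dependency-plus-preserved-program-order relation whose restriction to each thread is exactly the total order named by $F=\{\{1\prec2\}\}$, $G=\{\{3\prec4\}\}$, $H=\{\{5\prec6\}\}$ --- a data dependency through \texttt{r1} in Thread~1, through \texttt{r2} in Thread~2, and a control dependency through \texttt{r3} governing the conditional write in Thread~3 --- and that the collapse from event-futures to label-futures is sound here, just as in \refsec{sec:verif-ex}. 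Once this is established, each future in $F$, $G$, $H$ is totally ordered, so every single-step obligation $\{I\}_{G'}\ \pset{Q}\ \{I\}$ generated by \reflem{lem:og} degenerates to an ordinary sequential step and the whole obligation set coincides with that of the order-preserving Owicki--Gries system of Dalvandi et al.

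Next I would discharge the boundary conditions of \reflem{lem:og}. $\mathit{Init}\imp X$ is immediate; $X\imp\forall t.\ I_t({F_\mu}_{|t})$ follows because $[x=0]_{1,2,3}\wedge[y=0]_{1,2,3}$ together with the register values $0$ entail each thread's entry assertion, in particular the view bounds $\forall i.\ i\neq0\imp[x\not\approx i]_1$ and its analogues; and $\forall t.\ I_t(\emptyset)\imp Y$ holds since Thread~3's exit assertion is literally $[x\not\approx99]_{1,2,3}$ while the other two exit assertions are $\mathit{true}$. Local correctness (clause~1 of \reflem{lem:og}) is then routine symbolic execution with the \textsc{Read}/\textsc{Write} rules of \reffig{fig:c11-opsem} and the definitions of $[x=v]_t$ and $[x\approx v]_t$: reading $x$ into \texttt{r1} under the bound $\forall i.\ i\neq0\imp[x\not\approx i]_1$ forces $\texttt{r1}=0$; writing $\texttt{r1}+1=1$ to $y$ keeps $y$'s observable values within $\{0,1\}$; reading $y$ into \texttt{r2} under $\forall i.\ i\notin\{0,1\}\imp[y\not\approx i]_2$ forces $\texttt{r2}\in\{0,1\}$; writing \texttt{r2} to $x$ keeps $x$'s observable values within $\{0,1\}$; and reading $x$ into \texttt{r3} then forces $\texttt{r3}\in\{0,1\}$, so the guard $\texttt{r3}=100$ is unsatisfiable and the triple for \texttt{6:~x := 99} holds vacuously from its precondition $\{\False\}$, which is exactly what delivers $[x\not\approx99]_{1,2,3}$ downstream.

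For global correctness (clause~2) I would check that each displayed assertion is stable under every atomic step of the other two threads, using the conjoined local invariant $I_{t_1}(F_1)$ as the context that \reflem{lem:og} permits. The only non-routine cases are the view-bound conjuncts: Thread~2's \texttt{4:~x := r2} must preserve $[x\not\approx i]_1$ and $[x\not\approx i]_3$ for $i\notin\{0,1\}$, which follows from the conjunct $\texttt{r2}\in\{0,1\}$ carried at $G_3$; symmetrically Thread~1's \texttt{2:~y := r1+1} must preserve $[y\not\approx i]_2$ for $i\notin\{0,1\}$, which follows from $\texttt{r1}=0$; and $[x\not\approx99]_{1,2,3}$ is preserved by every write because no write has value $99$ on any reachable branch --- the sole syntactic occurrence, \texttt{6:~x := 99}, sits under a precondition equivalent to $\False$. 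All remaining pairs commute because they touch disjoint variables or registers.

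The main obstacle is the first step: showing that $\MRD$'s semantic dependency for this program --- which contains a cycle through $x$ (Thread~1 reads $x$, writes $y$; Thread~2 reads $y$, writes $x$) --- is exactly $\{1\to2,\ 3\to4,\ 5\to6\}$, and hence that the value range $\{0,1\}$ for $x$ (and thus for \texttt{r1}, \texttt{r2}, \texttt{r3}) used throughout the outline is the one $\MRD$ actually certifies. Concretely this amounts to invoking the $\mrdrf\cup\dep\cup\ppo$-acyclicity axiom of $\MRD$ to rule out the would-be ``out-of-thin-air'' execution in which Thread~1 reads Thread~2's write of $x$ while Thread~2 reads Thread~1's write of $y$, which would close the cycle $1\to2\to3\to4\to1$. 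Once that is in place the semantics reduces to the order-preserving setting and, as noted after \reffig{fig:rng}, the remaining obligations are precisely those already discharged mechanically in the Isabelle/HOL development of Dalvandi et al.
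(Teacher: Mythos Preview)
Your proposal is correct and follows essentially the same approach as the paper: observe that \MRD yields the total per-thread dependencies $1\prec2$, $3\prec4$, $5\prec6$, so the problem collapses to the order-preserving Owicki--Gries setting of Dalvandi et al., whose obligations the paper reports as discharged in Isabelle/HOL. You simply spell out explicitly the local-correctness and interference-freedom checks that the paper leaves to the mechanisation; the only substantive point either of you needs beyond routine symbolic execution is the one you flag at the end, namely that \MRD's acyclicity axiom rules out the thin-air cycle that would let the value $99$ (or $100$) materialise.
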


\subsection{LB+data-add+ctrl}

The next two examples are designed to test whether a chain of semantic dependencies on data, addresses and control are properly modelled by the semantics.

\begin{figure}[h]

\centering

\noindent
\begin{tabular}[t]{l||@{\ \ }l@{\ \ }}
\multicolumn{2}{l}{{\tt Init: x = y = r1 = r2 = 0}}\\
\multicolumn{1}{l}{\tt Thread 1} & 
\multicolumn{1}{l}{\tt Thread 2} 
\\
\begin{minipage}[t]{0.35\columnwidth}\tt
 \vspace{-0.75em}
 \blue{\textrm{$\{[y = 0]_1 \wedge r2  = 0 $}}
\\ 
\blue{\textrm{${} \wedge \forall i.\ i \neq 0 \imp  [x \not\approx i]_1\}_F$}} \\
1: r1 := x \\
 \blue{\textrm{$\{[y = 0]_2 \wedge r1 = 0 \wedge r2 = 0\}_{F_1}$}}\\
2: z := r1 + 1  \\
\blue{\{$r1 \neq 1 \lor r2 \neq 2\}_\emptyset$} 
\end{minipage}
&
\begin{minipage}[t]{0.34\columnwidth}\tt
 \vspace{-0.75em}
\blue{\textrm{$\{\forall i.\ i \notin \{0,1\} \imp [y \not\approx i]_2\}_{G}$}} \\
3: r2 := z \\
 \blue{\textrm{$\{r2 \in \{0,1\}\}_{G_3}$}}\\
\phantom{5:} \kwif\ r2 = 2\ \kwthen \\
\phantom{5:} \quad \blue{$\{{\it false}\}_{G_3}$} \\
4: \quad x := 1 \\
\blue{\{$true\}_\emptyset$}
\end{minipage}
\end{tabular}\\[2pt]
\blue{\{$r1 \neq 1 \lor r2 \neq 2$\}} 
\caption{Simple LB+data-add+ctrl, where $F= \{\{1\prec 2\}\}$ and $G = \{\{3\prec4\}\}$ }
\label{fig:simp-LB+data-add+ctrl}
\end{figure}

\begin{figure}[h]

\centering

\noindent
\begin{tabular}[t]{l||@{\ \ }l@{\ \ }}
\multicolumn{2}{l}{{\tt Init: x = y = r1 = r2 = r3 = r4 = 0}}\\
\multicolumn{1}{l}{\tt Thread 1} & 
\multicolumn{1}{l}{\tt Thread 2} 
\\
\begin{minipage}[t]{0.35\columnwidth}\tt
 \vspace{-0.75em}
\blue{\textrm{\{$[y = 0]_1 \wedge [z = 0]_2$}}
\\
\blue{\textrm{$\wedge (\forall i.\ i \neq 0 \imp  [x \not\approx i]_1\}_{F}$}} 
\\
1: r1 := x 
\\
 \blue{\textrm{$\{[y = 0]_1 \wedge [z = 0]_2 \wedge r1  = 0\}_{F_1}$}}\\
2: r2 := r1; \\
\blue{\{$[y = 0]_1 \wedge [z = 0]_2 \wedge r2  = 0\}_{F_2}$}  \\
3: y := r2; \\
\blue{\{$[y = 0]_1 \wedge [z = 0]_2\}_{F_3}$}  \\
4: r3 := y; \\
\blue{\{$[z = 0]_2 \wedge r3  = 0\}_{F_4}$}  \\
5: z := r3 + 1; \\
\blue{\{$r3 = 0\}_\emptyset$}  \\

\end{minipage}
&
\begin{minipage}[t]{0.5\columnwidth}\tt
 \vspace{-0.75em}
\blue{\textrm{$\{\forall i.\ i \notin \{0,1\} \imp [z \not\approx i]_2\}_G$}} \\
6: r4 := z \\
 \blue{\textrm{$\{r4 \in \{0,1\}\}_{G_6}$}}\\
\phantom{xx} \kwif\ r4 = 2\ \kwthen \\
\phantom{5:} \quad \blue{$\{{\it false}\}_{G_6}$} \\
7: \quad x := 1 \\
\blue{$\{r4 \in \{0,1\}\}_{\emptyset}$}
\end{minipage}
\end{tabular}\\[2pt]
\blue{$\{r3 \neq 1 \lor r4 \neq 2\}$} 
\caption{LB+data-add+ctrl, where $F= \{\{1\prec 2 \prec 3 \prec 4 \prec 5\}\}$ and $G = \{\{6\prec7\}\}$}
\label{fig:LB+data-add+ctrl}
\end{figure}

The proofs of both programs are straightforward since they do not contain any intra-thread re-orderings. Again, we have shown correctness of both proof outlines using our existing Isabelle/HOL formalisation~\cite{DBLP:journals/corr/abs-2004-02983}.

\begin{theorem}
The proof outlines in \reffig{fig:simp-LB+data-add+ctrl} and \reffig{fig:LB+data-add+ctrl} are valid. 
\end{theorem}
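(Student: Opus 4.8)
The plan is to discharge both proof outlines through the Owicki--Gries decomposition of \reflem{lem:og}, taking the per-thread future predicates $I_t$ directly from the blue annotations, $X$ as the stated global precondition and $Y$ as the stated postcondition. In both figures the future sets are singletons whose one future is a \emph{total} order on that thread's actions ($F=\{\{1\prec2\prec\cdots\}\}$, $G=\{\{6\prec7\}\}$, etc.), so \textsc{Future-Step} always has a unique available action, namely the program-order-first remaining statement; hence each thread executes in program order and the future-stability obligations of \reflem{lem:og} collapse to the two familiar families: (1) a chain of single-statement Hoare triples down each thread (local correctness), and (2) stability of every intermediate assertion of one thread under every atomic step of the other (global correctness). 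The side conditions ${\it Init}\imp X$, $X\imp\forall t.\ I_t({F_\mu}_{|t})$ and $\forall t.\ I_t(\emptyset)\imp Y$ are read off immediately.

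For local correctness I would walk each thread using the memory rules of \reffig{fig:c11-opsem} and the view assertions of \refsec{sec:assertions}. A load reads from some $w\in\OW_\sigma(t)$, so the invariant $\forall i.\ i\neq0\imp[x\not\approx i]_1$ pins $r1=0$ after line~1; the assignments then carry the value $0$ along the chain (trivially in \reffig{fig:simp-LB+data-add+ctrl}, through $r2,y,r3$ in \reffig{fig:LB+data-add+ctrl}), and the final assignment stores $z$ with value $1$. Since no rule ever writes another value to $z$, the bound $\forall i.\ i\notin\{0,1\}\imp[z\not\approx i]_2$ used by thread~2 really is invariant, so the load of $z$ yields $r4\in\{0,1\}$ (resp.\ $r2\in\{0,1\}$). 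The guard $r4=2$ (resp.\ $r2=2$) is therefore unsatisfiable: the $\{{\it false}\}$ annotation in front of line~7 (resp.\ line~4) is correct, the guarded store is dead code, and the postcondition holds.

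For global correctness I would note that thread~2's only write to a shared variable is the dead store, so it modifies neither $x$ nor $z$; thread~1's assertions --- which constrain only $r1,r2,r3$, the $\ltmo$-order of $y$ and $z$, and the observable writes to $x$ (all $0$) --- are thus trivially stable. Conversely, thread~1 writes $y$ and $z$, but always $\ltmo$-appending and always with a value already inside the permitted range (which the conjoined thread~1 invariant in the global check of \reflem{lem:og} supplies), so thread~2's ``possible view'' assertion on $z$ and its register facts survive. These are precisely the routine checks the paper already elides for the analogous examples.

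The step with genuine content --- and the expected obstacle --- is breaking the apparent thin-air cycle: confining $z$ to $\{0,1\}$ needs $r1=0$, which needs thread~2 never to write $1$ to $x$, which needs the guard false, which itself needs the bound on $z$. Owicki--Gries cuts this by treating $\forall i.\ i\neq0\imp[x\not\approx i]_1$ as an \emph{assumed} invariant whose sole non-trivial obligation is stability under line~7 (resp.\ line~4), and that holds vacuously because the store is unreachable. This is legitimate here exactly because the operational semantics over $\MRD$ never enables that store --- the execution that would be needed is the thin-air execution $\MRD$ rules out --- so the invariant is never challenged and the remaining obligations reduce to the symbol-pushing deferred elsewhere.
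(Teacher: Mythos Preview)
Your proposal is correct and follows the same line as the paper. The paper's own argument is a single sentence: because the futures $F$ and $G$ in both figures are totally ordered chains, there are no intra-thread reorderings, and the obligations collapse to the ordinary program-order Owicki--Gries checks already mechanised in the existing Isabelle/HOL development~\cite{DBLP:journals/corr/abs-2004-02983}. You make exactly this observation and then spell out the local and global checks the paper elides, including the crucial point that the guarded store in thread~2 has precondition ${\it false}$, so every interference obligation involving it is discharged vacuously and the apparent thin-air cycle is broken.
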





\end{document}